\begin{document}
\renewcommand{\theequation}{\arabic{section}-\arabic{equation}}
\newtheorem{thm}{Theorem}[section]
\newtheorem{thmsec}{Theorem}[section]
\newtheorem {rmk}[thm]{Remark}
\newtheorem {rmksec}[thmsec]{Remark}
\newtheorem{lem}[thm]{Lemma}
\newtheorem{prop}{Proposition}
\newtheorem{coro}[thm]{Corollary}
\hoffset = -2.5truecm \voffset = -2.0truecm
\def\cdot{{\scriptstyle\,\bullet\,}}
\renewcommand{\theequation}{\arabic{section}-\arabic{equation}}

\def\red{\color{red}}
\def\blue{\color{blue}}

\title{\bf Nonisospectral deformations of noncommutative Laurent biorthogonal polynomials and matrix discrete Painlev\'{e}-type equations}
\author[1]{Dan Dai}
\affil[1]{Department of Mathematics, City University of Hong Kong Tat Chee Avenue, Kowloon, Hong Kong}
\author[1]{Xiaolu Yue \thanks{xiaolyue@cityu.edu.hk}}
\date{}
\maketitle{}
\begin{abstract}
In this paper, we establish a connection between noncommutative Laurent biorthogonal polynomials (bi-OPs) and matrix discrete Painlev\'{e} (dP) equations. We first apply nonisospectral deformations to noncommutative Laurent bi-OPs
to obtain the noncommutative nonisospectral mixed relativistic Toda lattice and its Lax pair. Then, we perform a stationary reduction on this Lax pair to obtain a matrix dP-type equation.
% \textcolor{red}{All these results hold true in a more general noncommutative context.} 
The validity of this reduction is demonstrated through a specific choice of weight function and the application of quasideterminant properties. In the scalar case, our matrix dP equation reduces to the known alternate dP II equation.\\
\textbf{Keywords:} noncommutative Laurent biorthogonal polynomials; nonisospectral deformations; matrix discrete Painlev\'{e} equations; stationary reduction; quasideterminant
\end{abstract}

\section{Introduction}

Orthogonal polynomials (OPs) constitute a classical subject with a rich history spanning two centuries; for example, see monographs by Szeg\H{o} \cite{MR372517}, Chihara \cite{MR481884} and Ismail \cite{MR2542683}. Classical families, such as Hermite polynomials, Laguerre polynomials, and Jacobi polynomials, have wide application across mathematics and physics, including approximation theory, random matrix theory, integrable systems, quantum mechanics, and so on. Although the classical orthogonal theory can be traced back to the 19th century, ongoing discoveries of new applications continue to drive the field, leading to various significant generalizations. One of the famous examples is the family of biorthogonal polynomials (bi-OPs), which include Cauchy bi-OPs, partial-skew OPs, and Laurent bi-OPs,  among others. For instance, inspired by the study of multi-peakon solutions of the Degasperis-Procesi shallow water wave equation, Bertola et al introduced Cauchy bi-OPs in \cite{bertola2010cauchy}. 
% {\blue Deforming Cauchy bi-OPs can yield a CKP-type Toda equation \cite{yueXL2023thesis}}.
Chang et al. proposed the concept of partial-skew OPs \cite{chang2018partial}, which are associated with the Bures random ensemble. As an extension of OPs on the unit circle \cite{kharchev1997faces}, Laurent bi-OPs are  introduced in the study of the two-point Pad\'{e} approximation problem \cite{jones2006survey}.
% They can also be used to derive semi-discrete and fully discrete relativistic Toda (rToda) lattice \cite{kharchev1997faces}. In particular, the semi-discrete rToda lattice is associated with both the Lotka-Volterra lattice and $R_I$ lattice \cite{vinet1998integrable}. 
 % For a given power series, the fully discrete rToda lattice can be used to design a new Pad\'{e} approximation algorithm \cite{minesaki2001discrete}. 
 It also appears in the setting of the pentagram map defined on polygons in the projective space, with the full discrete  relativistic Toda (rToda) lattice being classified as the leapfrog map.
Recently, \cite{wang2022generalization} presented generalized Laurent bi-OPs  and achieved both the generalized positive and negative rToda. This work was partially inspired by the investigation of the so-called coupled pentagram map in \cite{MR4832540}.

% OPs play a significant role in various fields of research, such as mathematical physics, two-dimensional quantum gravity, as well as the computation of gap probabilities and partition functions associated with random models 
% % \cite{aptekarev1997toda,chang2014several,fokas1992isomonodromy,adler2009moment,johansson2001discrete}.
% \cite{aptekarev1997toda,chang2014several,fokas1992isomonodromy,adler2009moment}. 
% Currently, the close relationship between OPs and integrable systems is attracting the interest of more researchers. In this context, Painlev\'{e} equations can be produced from OPs. 

It is well-known that there is a profound connection between orthogonal polynomials with semi-classical weight functions and both continuous and discrete Painlev\'e equations; see Van Assche \cite{van2017orthogonal} and references therein. The relationship between biorthogonal polynomials and integrable systems has also been extensively studied in recent years. For example, by introducing a time evolution into the weight function of the Cauchy bi-OPs, the corresponding recurrence coefficients satisfy a CKP-type Toda equation \cite{yueXL2023thesis}. Similarly, the BKP-type Toda equation can be generated from partial-skew OPs \cite{chang2018partial,yueXL2023thesis}. Furthermore, Laurent bi-OPs can also be used to derive semi-discrete and fully discrete rToda lattice \cite{kharchev1997faces}. In particular, the semi-discrete rToda lattice is associated with both the Lotka-Volterra lattice and $R_I$ lattice \cite{vinet1998integrable}. For a given power series, the fully discrete rToda lattice can be used to design a new Pad\'{e} approximation algorithm \cite{minesaki2001discrete}.
% {\blue maybe we move the blue parts in the above paragraph here.} 
In this paper, we contribute to this field of research by investigating the connection between noncommutative Laurent bi-OPs and discrete Painlev\'{e} (dP) equations. The dP equations are nonlinear, non-autonomous, second-order ordinary difference equations that pass an integrability criterion known as singularity confinement \cite{grammaticos1991integrable}. Since they reduce to the continuous versions in a suitable continuum limit, dP equations are regarded as discrete counterparts to the classical Painlev\'e differential equations; see, for example, Joshi \cite{MR3931704}. Furthermore, dP equations possess significant applications in numerous areas, such as geometry, reductions of lattice equations, quantum gravity, and certain discrete gap probabilities in random partitions; for example, see \cite{MR1979052,fokas1991discrete,hone2002lattice, grammaticos1999discrete}.

Various methods for deriving dP equations have been proposed in the literature, including two effective techniques: the compatibility method based on orthogonality \cite{van2017orthogonal} and the stationary reduction method for nonisospectral flow \cite{levi1992non}. 
% However, the first method generally requires the selection of an appropriate semiclassical weight function to obtain the corresponding structure relation. This can be challenging for novel orthogonalities such as Cauchy biorthogonality, partial-skew orthogonality, and generalized Laurent biorthogonality etc. The second method, on the other hand, typically applies stationary reduction directly to nonisospectral flows, failing to provide solutions to the dP equations or validate the reasonableness of stationary reduction from the solution perspective. 
The first method generally requires choosing an appropriate semi-classical weight function to derive the corresponding structure relation, and then dP equations are deduced by applying the compatibility condition between the recurrence relation and the structure relation. However, for bi-OPs, since one can not simply add a semi-classical factor in the weight function, this approach seems to be inapplicable. The second method involves a direct stationary reduction of nonisospectral equations to derive the dP equations. However, a fundamental limitation of this methodology is that it does not provide the solutions to the dP equations, nor does it justify the stationary reduction from the perspective of solutions. It is worth noting that Yue, Chang, and Hu recently refined this method in their application of nonisospectral deformations to OPs in \cite{yueXL2023thesis}. In their work, they first introduce a nonisospectral deformation without specifying a particular weight function. Then, they use the compatibility condition between the recurrence relation and the time evolution to derive nonisospectral integrable equations and their corresponding Lax pairs. A subsequent stationary reduction of these systems yields the dP equations.
% and {\blue  associated Lax pairs}. 
Moreover, they go beyond this by providing  quasideterminant solutions to the dP equations and rigorously justifying the stationary reduction through the explicit construction of a specific weight function.

As a noncommutative generalization of OPs, matrix OPs (MOPs) were introduced by Krein \cite{MR34964}. 
% In 1984, Aptekarev and Nikishin demonstrated that the Favard's theorem is also valid for MOPs, meaning that a family of matrix polynomials is orthogonal with respect to a certain positive definite measure if it satisfies a recurrence relation of the following form:
% \begin{equation}
%     \lambda P_n(\lambda)=a_{n+1}P_{n+1}(\lambda)+b_nP_n(\lambda)+a_n^*P_{n-1}(\lambda).\nonumber
% \end{equation}
% Moreover, it has been observed that MOPs exhibit properties akin to those of classical orthogonal polynomials; for instance, they also satisfy second-order differential equations \cite{grunbaum2003matrix,duran2004orthogonal}.
% MOPs are a crucial part of approximation theory and mathematical physics \cite{castro2005orthogonal,grunbaum2007matrix,sinap1996orthogonal}, serving important roles in diverse scientific disciplines while garnering increasing research interest.  ‌Specifically, in group theory and representation theory, a family of MOPs associated with compact symmetric spaces can be derived through the study of matrix valued spherical functions 
% \cite{pacharoni2007matrix}.  
Research interest in MOPs has grown significantly in recent years. One of the key developments is the formulation of a Riemann-Hilbert (RH) problem for MOPs in \cite{cassatella2012riemann}, generalizing the seminal results of Fokas, Its, and Kitaev \cite{fokas1992isomonodromy} to the matrix context. Subsequently, Cafasso established double integral representations for the Christoffel-Darboux kernels related to two Hermite-type MOPs and demonstrated that their associated Fredholm determinants are connected to a specific RH problem in \cite{cafasso2014non}. 
% {\blue maybe we can skipped the asymptotics here 
% Furthermore, an asymptotic analysis of MOPs on the complex plane was conducted by Dea\~no, Kuijlaars, and Rom\'{a}n in \cite{deano2023asymptotics}, which proves crucial for studying the random tilings of plane figures \cite{duits2020two,berggren2019correlation} in the field of integrable probability.}
Furthermore, analogous to the scalar case, MOPs on the real line have been shown to satisfy noncommutative versions of integrable hierarchies such as the Toda and Volterra lattices \cite{li2024matrix}. Similarly, MOPs on the unit circle lead to the Ablowitz-Ladik hierarchy \cite{cafasso2009matrix}. These connections naturally extend to the study of noncommutative dP equations; for example, see \cite{cassatella2012riemann,adler2020painleve}.

% The profound interplay between OPs and Painlev\'{e} systems (encompassing both continuous and discrete formulations) admits a natural generalization to noncommutative algebras.
% In the pioneering work \cite{nagoya2004quantum}, 
% Nagoya proposed the quantum extension of Painlev\'{e}, namely the noncommutative  Painlev\'{e} equation. 
% At present, there is also a wealth of literature that studies the noncommutative dP equations \cite{nagoya2008quantum,adler2020painleve}. 
% In addition, further developments emerged when researchers derived the matrix dP equations for the first time through the RH approach \cite{grunbaum2011properties,cassatella2012riemann}, along with singularity conﬁnement conducted on matrix dPI \cite{cassatella2014singularity}.
% For additional works employing this methodology to construct matrix dP, see 
% \cite{branquinho2021matrix}.

The connection between Laurent bi-OPs and the leapfrog map can be generalized to the noncommutative setting
% A fundamental connection can also be found between noncommutative Laurent bi-OPs and the noncommutative leapfrog map (NLM) 
\cite{wang2025noncommutative}, in which the leapfrog map is recognized as a one-dimensional counterpart of the pentagram map introduced in \cite{gekhtman2016integrable}. 
% Prior research has demonstrated the noncommutative relativistic Toda by employing NLBOPs and NLM \cite{wang2025noncommutative}. 
Moreover, the noncommutative rToda has been successfully characterized through the application of noncommutative Laurent bi-OPs and noncommutative leapfrog map in prior studies \cite{wang2025noncommutative}.
% Additionally, scalar Laurent bi-OPs and dP are examined in \cite{yue2022laurent}. 
Additionally, dP equations have been studied in the context of scalar Laurent bi-OPs \cite{yue2022laurent} and were derived from their generalized counterparts via stationary reduction method based on nonisospectral deformation \cite{yueXL2023thesis}. However, to the best of our knowledge, the relationship between noncommutative Laurent biorthogonal polynomials and matrix dP equations remains unexplored. This paper aims to address this gap by employing the refined nonisospectral deformation approach introduced in \cite{yueXL2023thesis}.

The structure of this paper is as follows.  In Section \ref{preliminears}, we provide a brief introduction to the basic properties of quasideterminants and noncommutative Laurent bi-OPs. In Section \ref{section-toda}, we perform nonisospectral deformations on noncommutative Laurent bi-OPs, leading to the derivation of the noncommutative nonisospectral mixed rToda lattice. Moreover, we apply stationary reduction to obtain the matrix dP. We then select a specific weight function and use the properties of quasideterminants to justify the reduction. 
% Finally, an analysis is provided to {\blue we can be more precise here. explain the rationale behind the stationary reduction}.
% The results of this paper are summarized in the Section \ref{conclusion-section}. {\blue This is not correct, right?}
Finally, we provide a conclusion and a discussion of this work in Section \ref{conclusion-section}.

\section{Preliminaries}\label{preliminears}
In this section, we will give a brief introduction of quasideterminants, noncommutative Laurent bi-OPs, particularly matrix Laurent bi-OPs and their related properties.

\subsection{Quasideterminants}

% An $N\times N$ matrix $A=(a_{i,j})_{i,j=1}^{N}$ whose entries are defined in a noncommutative ring. If all inverse $\left(A^{i, j}\right)^{-1}$ exist, then there are $N^2$ quasideterminants $|A|_{i,j}$ for $i,j=1,\cdots,N$ of $A$. 

Consider an $N\times N$ matrix $A=(a_{i,j})_{i,j=1}^{N}$ with entries defined over a noncommutative ring. Let $A^{i,j}$ be the submatrix of $A$ obtained by removing the $i$th row and the $j$th column from $A$. When all the inverses $\left(A^{i, j}\right)^{-1}$ exist for $i,j=1,\cdots,N$, the matrix $A$ admits
$N^2$ well-defined quasideterminants, denoted by $|A|_{i,j}$.
They are defined recursively by
\begin{equation}
|A|_{i,j}=a_{i,j}-r_i^j \left(A^{i,j}\right)^{-1}c_j^i=\begin{vmatrix}
    A^{i,j} & c_j^i \\ r_i^j & \boxed{a_{i,j}}\end{vmatrix},  \quad A^{-1}= \left(|A|_{j,i} ^{-1}\right),
\end{equation}
where $r_i^j$ represents the $i$th row of $A$ with the $j$th element removed, $c_j^i$ represents the $j$th column of $A$ with the $i$th element removed. In fact, if the entries $a_{i,j}$ in $A$ commute, then we have 
\begin{equation*}
\begin{vmatrix} A \end{vmatrix} _{i,j}  = (-1)^{i+j} \frac{\mbox{det}(A)}{\mbox{det}(A^{i,j})} . 
\end{equation*}

Let $A, B, C$ and $d$ be functions of the independent variable $t$. Then, we have
% The derivative formula for quasideterminants is given by
\begin{equation*} 
    \begin{vmatrix} A & B \\ C & \boxed{d} \end{vmatrix}'=d'-C' A^{-1}B-C A^{-1}B'+C A^{-1}A'A^{-1}B,
\end{equation*}
where $'$ denotes the derivative with respect to $t$. If we incorporate the identity matrix expressed as $\sum_{k=0}^{N-1}e_k^Te_k$, where $e_k$ is the row vector of length $N$ with a value of $1$ at the  $(k+1)$-th position and $0$ elsewhere. Let $A_k$ represent the $k$-th column of the matrix $A$, then we get
\begin{align}\label{derivative-formulas-for-quasideterminants}
    \left|
\begin{array}{cc}
A&B\\
C&\boxed{d}
\end{array}\right|'=&\left|
\begin{array}{cc}
A&B'\\
C&\boxed{d'}
\end{array}\right|+\sum_{k=0}^{N-1}\left(\left|
\begin{array}{cc}
A&(A_k)'\\
C&\boxed{(C_k)'}
\end{array}\right|
\left|\begin{array}{cc}
A&B\\
    e_k&\boxed{0}
\end{array}\right|\right)\\
=&\left|
\begin{array}{cc}
A&B\\
C'&\boxed{d'}
\end{array}\right|+\sum_{k=0}^{N-1}\left(\left|
\begin{array}{cc}
A&e_k^{T}\\
C&\boxed{0}
\end{array}\right|
\left|\begin{array}{cc}
A&B\\
    (A^k)'&\boxed{(B^k)'}
\end{array}\right|\right).
\end{align}
In addition, there are numerous other important properties of quasideterminants; for details, please refer to \cite{gelfand2005quasideterminants,gilson2007direct,MR1328259,li2008quasideterminant}.

\subsection{Noncommutative Laurent bi-OPs}  \label{section-matrix-orthogonal-polynomials}

Gelfand et al. introduced the theory of noncommutative OPs in \cite{gelfand1994noncommutative}, establishing a formal analogy with MOPs. The theory of noncommutative OPs has been extended to noncommutative bi-OPs, such as noncommutative Laurent bi-OPs and Cauchy bi-OPs \cite{ariznabarreta2022multivariate,wang2025noncommutative, li2025matrix}. Here, we primarily study noncommutative Laurent bi-OPs.

Let $R$ be a skew field generated by the unity $1$ and the formal moments $\{m_i\}_{i=-\infty}^{\infty}$. Correspondingly, the formal power series (resp., polynomials) in $\lambda$ with coefficients from the skew field $R$ are denoted by $R[[\lambda]]$ (resp.,$R[\lambda]$). Furthermore, this skew field is equipped with an involution $R\rightarrow R^*$, which satisfies $(a_i)^*=a_i^*$ for the coefficients. 
% This involution can then be extended to the polynomial ring via $R[\lambda]\rightarrow R^*[\lambda^{-1}]$, where the mapping is defined such that
This involution can be generalized to the polynomial ring in a manner consistent with $R[\lambda]\rightarrow R^*[\lambda^{-1}]$, so that
\begin{align*}
   \big( \sum_i a_i\lambda^i\big)^*=\sum_i a_i^*\lambda^{-i}.
\end{align*}
Thus, we define an inner product $\langle \cdot ,\cdot \rangle$: $R[[\lambda]]\times R[[\lambda]]\rightarrow{R}$ as
\begin{align}\label{inner defi}
   \big \langle
    \sum_i a_i\lambda^i,\sum_j b_j\lambda^j\big\rangle=\sum_{i,j} a_i m_{i-j}b_j^*.
\end{align}
The following properties for the inner product can be easily verified.
\begin{enumerate}
    \item $\langle \alpha_1 p_1(\lambda)+\alpha_2 p_2(\lambda), q(\lambda)\rangle=\alpha_1\langle p_1(\lambda),q(\lambda)\rangle +\alpha_2\langle  p_2(\lambda), q(\lambda)\rangle$;
    \item $\langle p(\lambda),\beta_1 q_1(\lambda)+\beta_2 q_2(\lambda)\rangle=\langle p(\lambda),q_1(\lambda)\rangle \beta_1^*+\langle p(\lambda),q_2(\lambda)\rangle \beta_2^*$;
    \item $ \langle \lambda p(\lambda),q(\lambda)\rangle =\langle p(\lambda),\lambda^{-1}q(\lambda)\rangle $,
\end{enumerate}
where $p_1(\lambda),\,p_2(\lambda),\,q_1(\lambda),\,q_2(\lambda),\, p(\lambda),\, q(\lambda) \in R[[\lambda]]$ and $\alpha_1,\, \alpha_2,\,\beta_1,\,\beta_2 \in R$.

Monic noncommutative Laurent bi-OPs denote two families of
polynomials $\{P_n(\lambda)\}_{n\in N}$
and $\{Q_n(\lambda)\}_{n\in N}$ satisfying the biorthogonality condition
\begin{align}\label{orthogonal-pi}
    \langle P_n(\lambda),Q_m(\lambda) \rangle=H_n\delta_{nm},
\end{align}
where $H_n\in R$ is a normalization factor and the leading coefficients of
$P_n(\lambda)$ and $Q_n(\lambda)$ are the units in $R$.

By using the biorthogonality condition \eqref{orthogonal-pi}, it is not hard to see that
$\{P_n(\lambda)\}_{n\in N}$
and $\{Q_n(\lambda)\}_{n\in N}$ can be given in terms of quasideterminant expressions

\begin{subequations}
\begin{align}
        P_n(\lambda)=&\left|\begin{matrix}
         m_{0}    & \cdots &  m_{1-n}    &  1\\
          \vdots& \ddots           &  \vdots   &  \vdots\\
        m_{n-1}  & \cdots &  m_{0}    &  \lambda^{n-1}\\
        m_n      & \cdots &  m_{1}  &  \boxed{\lambda^n}
    \end{matrix}
        \right|:=\left|
\begin{array}{cc}
\Lambda_{n-1}&L^T\\
\theta_n^n&\boxed{\lambda^n}
\end{array}\right|,\label{the-experssion-of-P_n}\\
  (Q_n(\lambda))^*=&\left|\begin{matrix}
         m_{0}    & \cdots &  m_{1-n}    &  m_{-n}\\
          \vdots& \ddots          &  \vdots   &  \vdots\\
        m_{n-1}  & \cdots &  m_{0}    &  m_{-1}\\
        1      & \cdots &  \lambda^{1-n}  &  \boxed{\lambda^{-n}}
    \end{matrix}
        \right|:=\left|
\begin{array}{cc}
\Lambda_{n-1}&(\tilde{\theta}_{-n}^{n-1})^T\\
\tilde{L}&\boxed{\lambda^{-n}}
\end{array}\right|,
    \end{align}
with
    \begin{align*}
 &\Lambda_{n-1} = \left(\begin{matrix}
        m_{0}    & \cdots &  m_{1-n}   \\
          \vdots   &\ddots        &  \vdots   \\
        m_{n-1}  & \cdots &  m_{0}    
    \end{matrix}\right), \\
&\theta_j^i = (m_j,m_{j-1},\cdots,m_{j-i+1}), \quad \tilde{\theta}_j^i = (m_j,m_{j+1},\cdots,m_{j+i}), \\
& L = (1,\lambda,\cdots,\lambda^{n-1}), \quad \tilde{L} = (1,\lambda^{-1},\cdots,\lambda^{1-n}).  
\end{align*}
And $H_n$ has the quasideterminant representation
\begin{align}\label{hn}
H_n=\left|\begin{matrix}
         m_{0}    & \cdots &  m_{1-n}    &  m_{-n}\\
          \vdots   &\ddots        &  \vdots   &  \vdots\\
        m_{n-1}  & \cdots &  m_{0}    &  m_{-1}\\
        m_{n}      & \cdots &  m_{1}  &  \boxed{m_{0}}
    \end{matrix}
        \right|:=\left|
\begin{array}{cc}
\Lambda_{n-1}&(\tilde{\theta}_{-n}^{n-1})^T\\
\theta_n^n&\boxed{m_0}
\end{array}\right|.
\end{align}
\end{subequations}
In addition, we can utilize biorthogonality \eqref{orthogonal-pi} to derive the following proposition.

\begin{prop}\label{tree terms}
Noncommutative Laurent bi-OPs $\{P_n(\lambda)\}_{n\in N}$ satisfy the three-term recurrence relation
\begin{equation}\label{3-term}
\lambda (P_n(\lambda)+a_nP_{n-1}(\lambda))=P_{n+1}(\lambda)+b_n P_n(\lambda),    
\end{equation}
where the recurrence coefficients $a_n$ and $b_n$ have quasideterminant expressions
\begin{align}\label{an bn}
        a_n=-\tau_n \tau_{n-1}^{-1},\quad  b_n=a_n H_{n-1}H_n^{-1},
    \end{align}
where
\begin{align}\label{quasi-tau}
    \tau_n=\left|\begin{matrix}
         m_{0}    & \cdots &  m_{1-n}    &  m_{1}\\
          \vdots   & \ddots       &  \vdots   &  \vdots\\
        m_{n-1}  & \cdots &  m_{0}    &  m_{n}\\
        m_n      & \cdots &  m_{1}  &  \boxed{m_{n+1}}
    \end{matrix}
        \right|:=\left|
\begin{array}{cc}
\Lambda_{n-1}&(\tilde{\theta}_{1}^{n-1})^T\\
\theta_n^n&\boxed{m_{n+1}}
\end{array}\right|.
\end{align}

\end{prop}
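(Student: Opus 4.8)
The plan is to derive the three-term recurrence \eqref{3-term} directly from the biorthogonality condition \eqref{orthogonal-pi}, and then to identify the coefficients $a_n,b_n$ by pairing the recurrence against suitable test polynomials. First I would observe that since $P_n(\lambda)$ has the form $\lambda^n + (\text{lower order})$ with unit leading coefficient, the combination $\lambda P_n(\lambda)$ is a Laurent-type polynomial of degree $n+1$, and more importantly $\lambda(P_n(\lambda)+a_nP_{n-1}(\lambda))$ for \emph{any} choice of $a_n$ has leading term $\lambda^{n+1}$. So the difference $\lambda(P_n+a_nP_{n-1}) - P_{n+1}$ lies in the span of $\{1,\lambda,\dots,\lambda^n\}$ (as a left $R$-module, given the one-sided leading-coefficient normalization). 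The key structural point is that, using property (3) of the inner product, namely $\langle \lambda p,q\rangle = \langle p,\lambda^{-1}q\rangle$, together with biorthogonality, one gets $\langle \lambda P_n, Q_m\rangle = \langle P_n, \lambda^{-1}Q_m\rangle = 0$ whenever $\lambda^{-1}Q_m$ stays inside the span $\{Q_0,\dots,Q_{n-1}\}$-ish degrees, i.e.\ for $m \le n-2$ in the appropriate sense; this is what forces only three consecutive terms to survive.

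Next I would make the expansion precise. Write $\lambda P_n(\lambda) = \sum_{k=0}^{n+1} c_{n,k} P_k(\lambda)$ with $c_{n,n+1}=1$ (left coefficients $c_{n,k}\in R$), which is legitimate because $\{P_k\}$ form a basis of polynomials up to degree $n+1$ with the given normalization. Pair both sides on the right with $Q_m$ and use \eqref{orthogonal-pi}: $\langle \lambda P_n, Q_m\rangle = c_{n,m} H_m$. For $m \le n-2$, rewrite $\langle \lambda P_n, Q_m\rangle = \langle P_n, \lambda^{-1} Q_m\rangle$; since $Q_m$ has degree $m$ and we need to control $\lambda^{-1}Q_m$, the relevant observation is that biorthogonality of the \emph{pair} $\{P_n\},\{Q_n\}$ means $\langle P_n, r(\lambda)\rangle = 0$ for any polynomial $r$ that can be written as a right-$R$-combination of $Q_0,\dots,Q_{n-1}$; because $\lambda^{-1}Q_m$ for $m\le n-2$ expands in $Q_0,\dots,Q_{n-1}$ (degrees shift but stay bounded, and the Laurent framework with the $\lambda\mapsto\lambda^{-1}$ involution keeps things polynomial in the right sense), these terms vanish. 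Hence $c_{n,m}=0$ for $m\le n-2$, leaving $\lambda P_n = P_{n+1} + c_{n,n}P_n + c_{n,n-1}P_{n-1}$, which is exactly \eqref{3-term} upon setting $b_n = c_{n,n}$ and $a_n = c_{n,n-1}$ (after regrouping $\lambda a_n P_{n-1}$ to the left, using that $a_n$ here is the left coefficient of $P_{n-1}$ — so one must be a little careful that the relation is stated as $\lambda(P_n + a_nP_{n-1})$, meaning $a_n$ multiplies $P_{n-1}$ on the left before applying $\lambda$; since $\lambda$ is a scalar indeterminate commuting with $R$-coefficients this regrouping is harmless).

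Then I would compute $a_n$ and $b_n$ explicitly. For $a_n$: pair $\lambda P_n = P_{n+1} + b_nP_n + \lambda a_n P_{n-1}$ — more cleanly, pair the relation with $Q_{n-1}$ to get $c_{n,n-1}H_{n-1} = \langle \lambda P_n, Q_{n-1}\rangle = \langle P_n, \lambda^{-1}Q_{n-1}\rangle$. Now $\lambda^{-1}Q_{n-1}$ has a component along $Q_n$ (this is the ``boundary'' term that does not vanish), and extracting the coefficient of $\lambda^n$ in $P_n$ against the appropriate piece, together with the quasideterminant formulas \eqref{the-experssion-of-P_n}, \eqref{hn}, reduces $\langle P_n,\lambda^{-1}Q_{n-1}\rangle$ to a ratio of the $\tau_n$-type quasideterminant \eqref{quasi-tau} — this is where one substitutes the explicit moment matrices and uses a quasideterminant expansion (noncommutative Jacobi/Sylvester identity, e.g.\ from \cite{gelfand2005quasideterminants}) to collapse the expression to $-\tau_n\tau_{n-1}^{-1}$. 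For $b_n$: pairing with $Q_n$ gives $c_{n,n}H_n = \langle\lambda P_n,Q_n\rangle = \langle P_n,\lambda^{-1}Q_n\rangle$, and a similar quasideterminant manipulation plus the already-established identity $a_n = -\tau_n\tau_{n-1}^{-1}$ yields $b_n = a_nH_{n-1}H_n^{-1}$; alternatively one can get $b_n$ from comparing the coefficient of $\lambda^{n}$ on both sides of the recurrence directly in terms of the subleading coefficients of $P_{n+1},P_n,P_{n-1}$, which are themselves ratios of quasideterminants.

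The main obstacle I anticipate is the bookkeeping of left versus right multiplication throughout: the inner product is left-linear in the first slot and conjugate-\emph{right}-linear in the second, the leading coefficients are normalized on one side only, and the quasideterminant identities come in row- and column-versions that are not interchangeable over a noncommutative ring. Getting the recurrence coefficients to land on the correct side of $P_{n-1}$ and $P_n$ — and matching that to the precise form $\lambda(P_n + a_nP_{n-1})$ stated in \eqref{3-term} — requires care. The second delicate point is justifying that $\lambda^{-1}Q_m$ (for $m\le n-2$) genuinely lies in the right-$R$-span of $\{Q_0,\dots,Q_{n-1}\}$ within the Laurent setting; this uses the structure of the involution $R[\lambda]\to R^*[\lambda^{-1}]$ and the fact that property (3) is exactly tailored to make the ``Laurent'' shift respect the biorthogonality, but it should be spelled out rather than taken for granted. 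The reduction of the moment-matrix expressions to the compact quasideterminant ratios $-\tau_n\tau_{n-1}^{-1}$ and $a_nH_{n-1}H_n^{-1}$ is then a routine (if lengthy) application of the noncommutative Sylvester-type identities and I would not grind through it in detail.
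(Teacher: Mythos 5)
Your argument breaks at the crucial truncation step, and the failure is structural rather than a matter of bookkeeping. You expand $\lambda P_n(\lambda)=\sum_{k=0}^{n+1}c_{n,k}P_k(\lambda)$ and claim $c_{n,m}=0$ for $m\le n-2$ because ``$\lambda^{-1}Q_m$ expands in $Q_0,\dots,Q_{n-1}$.'' It does not: writing $Q_m(\lambda)=\lambda^m+\cdots+q_{m,0}$, the Laurent polynomial $\lambda^{-1}Q_m(\lambda)$ contains the genuinely negative power $q_{m,0}\lambda^{-1}$, which lies outside the right-$R$-span of $\{1,\lambda,\dots,\lambda^{n-1}\}=\mathrm{span}\{Q_0,\dots,Q_{n-1}\}$. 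Hence $\langle\lambda P_n,Q_m\rangle=\langle P_n,\lambda^{-1}Q_m\rangle=\langle P_n,\lambda^{-1}\rangle\,q_{m,0}^*$, and $\langle P_n,\lambda^{-1}\rangle$ is exactly the quantity $\tau_n$ of \eqref{quasi-tau}, which is nonzero in general. So the expansion of $\lambda P_n$ alone does \emph{not} truncate to three terms --- this is precisely what distinguishes Laurent bi-OPs from ordinary OPs. Your subsequent ``regrouping'' cannot repair this: $c_{n,n-1}P_{n-1}$ has degree $n-1$ while $-\lambda a_nP_{n-1}$ has degree $n$, so no choice of the constant $a_n$ turns $\lambda P_n=P_{n+1}+c_{n,n}P_n+c_{n,n-1}P_{n-1}$ into the stated relation $\lambda(P_n+a_nP_{n-1})=P_{n+1}+b_nP_n$.

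The missing idea (which is how the paper proceeds) is to define $a_n$ \emph{first} by the extra orthogonality condition $\langle P_n+a_nP_{n-1},\lambda^{-1}\rangle=0$, giving $a_n=-\langle P_n,\lambda^{-1}\rangle\langle P_{n-1},\lambda^{-1}\rangle^{-1}=-\tau_n\tau_{n-1}^{-1}$ directly from the quasideterminant representation of $P_n$. The combination $P_n+a_nP_{n-1}$ is then right-orthogonal to $\lambda^{-1},1,\lambda,\dots,\lambda^{n-2}$, hence to $\lambda^{-1}Q_k$ for every $k\le n-1$, and it is only the expansion of $\lambda(P_n+a_nP_{n-1})$ --- not of $\lambda P_n$ --- that collapses to $P_{n+1}+b_nP_n$. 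Pairing with $Q_n$ then gives $b_nH_n=\langle P_n+a_nP_{n-1},\lambda^{-1}Q_n\rangle=a_nH_{n-1}$, since only the top term $\lambda^{n-1}$ of $\lambda^{-1}Q_n$ survives and $\langle P_{n-1},\lambda^{n-1}\rangle=H_{n-1}$. Your concerns about left/right linearity and the Sylvester identities are reasonable but secondary; without the preliminary normalization of $a_n$ against $\lambda^{-1}$, the proof does not get off the ground.
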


\begin{proof}
    Setting
    \begin{equation*}
        \langle  P_n(\lambda)+a_n P_{n-1}(\lambda), \lambda^{-1}\rangle=0,
    \end{equation*}
    we have
    \begin{equation*}
        a_n=-\langle P_n,\lambda^{-1}\rangle \langle P_{n-1},\lambda^{-1}\rangle ^{-1}.
    \end{equation*}
 Furthermore, from the quasideterminant representation of $P_n(\lambda)$, it is straightforward to derive $a_n=-\tau_n \tau_{n-1}^{-1}$. The polynomial $\lambda(P_n(\lambda)+a_n P_{n-1}(\lambda))$ is monic of degree $n+1$ with respect to $\lambda$, so it can be expanded in terms of noncommutative Laurent bi-OPs $P_k$ with $0\leq k \leq n+1$, 
 \begin{equation*}
     \lambda(P_n(\lambda)+a_n P_{n-1}(\lambda))=P_{n+1}(\lambda)+\sum_{k=0}^n \alpha_{n,k}P_k(\lambda).
 \end{equation*}
Taking the inner product of both sides of the above equation with $Q_k(\lambda)$ and using the orthogonality yields:
\begin{align}
    \alpha_{n,k}=\langle P_{n}(\lambda)+a_nP_{n-1}(\lambda),\lambda^{-1}Q_k(\lambda)\rangle \langle P_k(\lambda),Q_k(\lambda) \rangle^{-1}
   =0,\nonumber
\end{align}
and  $\alpha_{n,k}=0$ when $k < n$. So we have
\begin{equation*}
    \lambda (P_n(\lambda)+a_nP_{n-1}(\lambda))=P_{n+1}(\lambda)+b_n P_n(\lambda),
\end{equation*}
and 
\begin{equation*}
    b_n=\alpha_{n,n}=\langle P_{n}(\lambda)+a_nP_{n-1}(\lambda),\lambda^{-1}Q_n(\lambda)\rangle \langle P_n(\lambda),Q_n(\lambda) \rangle^{-1}=a_nH_{n-1}H_n^{-1}.
\end{equation*}
Therefore, we have completed the proof.
\end{proof}

 We rewrite \eqref{3-term} in matrix form
\begin{align}\label{lax-r}
    \lambda A P=BP,
\end{align}
% $$\lambda A P=BP,$$
where $P=(P_0(\lambda),P_1(\lambda), P_2(\lambda),\ldots)^T$, and 
\begin{align}
A=\left(
\begin{array}{cccccc}
 1 &  &  &  &   \\
 a_1 & 1 &  &  &   \\
  & a_2 &1 &  &   \\
  &  &  \ddots & \ddots & \ \\
  &  &  &  \ddots & \ddots
\end{array}
\right),\quad
B=\left(\begin{array}
{cccccc}
 b_0 & 1 &  &  &   \\
  & b_1 & 1 &  &   \\
  &  & b_2 & 1 &   \\
  &  &   & \ddots & \ddots \\
  &  &   &  & \ddots \\
\end{array}
\right). \label{def:A-B}
\end{align}

% {\red The recurrence relation  \eqref{3-term} can be  rewritten in matrix form

% \begin{align}\label{lax-r2}
%     \psi_{n+1} =U_n\psi_n,
% \end{align}
% where $\psi_n=(P_{n-1}(\lambda),P_n(\lambda))^T$, and 
% \begin{align}\label{un}
% U_n=\left(
% \begin{array}{cccccc}
%  0 &  1  \\
%  \lambda a_n & \lambda-b_n 
% \end{array}
% \right).
% \end{align}

% }

\subsection{Matrix Laurent bi-OPs} \label{sec:Matrix Laurent bi-OPs}

As a special case of noncommutative Laurent bi-OPs, the study of matrix Laurent bi-OPs $\{P_n(\lambda)\}_{n\in \mathbb{N}}$ arises naturally. In this setting, the underlying skew field $R$ can be chosen as a subalgebra of $\mathbb{R}_+^{p\times p}$, where the unity $1$ corresponds to the identity matrix $\mathbb{I}_p$, and the formal moments $\{m_i\}_{i=-\infty}^{\infty}\in R$ are endowed with a natural involution $R\rightarrow R^T$.

% Given the inner product in the following specific form:
% \begin{align}\label{inner product}
%     \langle P_n(\lambda),Q_m(\lambda) \rangle=\int P_{n}(\lambda)d\mu(\lambda) Q_{m}^{T}(\frac{1}{\lambda}),
% \end{align}
% where $\mu(\lambda)$ is a semi-positive definite $p\times p$ matrix valued measure that ensures all the moments
% \begin{align*}
%     m_i=\int \lambda^id\mu(\lambda),
% \end{align*}
% exist. 
A pair of monic matrix polynomials $\{P_n(\lambda)\}_{n\in \mathbb{N}}$  and $\{Q_n(\lambda)\}_{n\in \mathbb{N}}$ are called matrix Laurent bi-OPs 
% with respect to the inner product \eqref{inner product} 
if they satisfy the orthogonality condition 
\begin{align}\label{orthogonality condition}
    \langle P_n(\lambda),Q_m(\lambda) \rangle=\int P_{n}(\lambda)d\mu(\lambda) Q_{m}^{T}(\frac{1}{\lambda})=H_n\delta_{nm},
\end{align}
where $\mu(\lambda)$ is a semi-positive definite $p\times p$ matrix valued measure that ensures all the moments
\begin{align} \label{moments-matrix}
    m_i=\int \lambda^id\mu(\lambda)
\end{align}
exist. 

Naturally, 
$\{P_n(\lambda)\}_{n\in \mathbb{N}}$, $\{Q_n(\lambda)\}_{n\in \mathbb{N}}$ and $H_n$ can still be expressed in terms of a quasideterminant as that in \eqref{the-experssion-of-P_n}-\eqref{hn}, respectively. Additionally, using the biorthogonality condition \eqref{orthogonality condition}, it is easy to show that matrix Laurent bi-OPs $\{P_n(\lambda)\}_{n\in \mathbb{N}}$ satisfy the three-term recurrence relation \eqref{3-term}. The recurrence relation satisfied by 
$\{Q_n(\lambda)\}_{n\in \mathbb{N}}$ can also be derived using the biorthogonality (cf. \cite{wang2025noncommutative}). 

\section {Noncommutative nonisospectral mixed rToda  lattice and matrix discrete Painlev\'{e} equation}\label{section-toda}
\setcounter{equation}{0}
In this section, we investigate nonisospectral deformations of noncommutative Laurent bi-OPs. 
% First, by employing the biorthogonality condition \eqref{orthogonality condition}, we establish the time evolution relations for matrix Laurent bi-OPs. 
First, by employing the biorthogonality condition \eqref{orthogonal-pi}, we establish the time evolution relations for noncommutative Laurent bi-OPs.
Subsequently, through the compatibility condition between three-term recurrence relation \eqref{3-term} and the time evolution we obtained, the noncommutative nonisospectral mixed rToda lattice is derived. Finally, by applying stationary reduction to both the lattice and its Lax pair, we obtain the matrix discrete Painlev\'{e} equation along with its corresponding Lax pair.

\subsection{Noncommutative nonisospectral mixed rToda lattice} \label{subsection-toda-1}
Nonisospectral deformations of noncommutative Laurent bi-OPs are performed by introducing the measure $\mu(\lambda,t)$, in which the spectral parameter $\lambda(t)$ is also time-dependent. Specifically, we consider the spectral parameter $\lambda(t)$ satisfying 
\begin{equation} \label{derivative-of-x}
\frac{d \lambda(t)}{d t} =\alpha_0 \lambda(t),
\end{equation}
where $\alpha_0 \neq 0$.
In the noncommutative setting, the moment $m_i$ is defined via the inner product in \eqref{inner defi}. More precisely, we have
\begin{equation}\label{non mom}
    m_i(t) = \big \langle
    \lambda(t)^k, \lambda(t)^j\big\rangle, \qquad \textrm{with } k-j=i.
\end{equation}
In the following text, we will denote $\lambda(t)$ as $\lambda$ for brevity. We further assume that the moments satisfy the following time evolution 
\begin{align}\label{moments evolution}
\frac{d}{dt}m_j(t)=\alpha_0 jm_j(t)+\alpha_1 m_{j+1}(t)+\alpha_2 m_{j-1}(t),
\end{align}
where $\alpha_1,\alpha_2$ are two arbitrary constants.

Given assumptions \eqref{derivative-of-x} and \eqref{moments evolution}, we can present the time evolution of the noncommutative Laurent bi-OPs in the following lemma.

\begin{lem}
  With the assumptions \eqref{derivative-of-x} and \eqref{moments evolution}, the noncommutative Laurent bi-OPs  $\{P_n(\lambda,t)\}_{n\in \mathbb{N}}$  defined in \eqref{orthogonal-pi} depend on $t$ and satisfy the  time evolution 
  \begin{align}\label{p-times}
   & \frac{d}{dt}P_n(\lambda,t)+a_n\frac{d}{dt}P_{n-1}(\lambda,t)\nonumber\\
=&n\alpha_0 P_n(\lambda,t)+((n-1)\alpha_0a_n+\alpha_1a_n(b_{n-1} -a_{n-1})-\alpha_2b_n^{-1}a_n)P_{n-1}(\lambda,t)-\alpha_2 a_nb_{n-1}^{-1}a_{n-1}P_{n-2}(\lambda,t).
\end{align}
\end{lem}

\begin{proof}
    Let 
    $$P_n(\lambda,t)=\lambda^n+\gamma_{n,n-1}\lambda^{n-1}+\cdots+\gamma_{n,1}\lambda+\gamma_{n,0}.$$
    From the biorthogonality condition \eqref{orthogonal-pi}, we know that when $i=0,1,\cdots,n-1,$
    \begin{equation*}
         0=\langle P_n(\lambda,t),\lambda^i \rangle=m_{n-i}+\gamma_{n,n-1}m_{n-1-i}+\cdots+\gamma_{n,1}m_{1-i}+\gamma_{n,0}m_{-i}.
    \end{equation*}
% {\blue In the above formula, the moments $m_i$ is defined via the inner product in \eqref{inner defi}. More precisely, we have
% \begin{equation*}
%     m_i = \big \langle
%     \lambda^k, \lambda^j\big\rangle, \qquad \textrm{with } k-j=i.
% \end{equation*}
% In this matrix case, the above definition reduces to \eqref{moments-matrix}.}
    
Under assumption \eqref{moments evolution}, differentiating both sides of the above equation with respect to $t$ yields
\begin{align}
0=&\dot{\gamma}_{n,n-1}m_{n-1-i}+\cdots+\dot{\gamma}_{n,1}m_{1-i}+\dot{\gamma}_{n,0}m_{-i}+\alpha_0(nm_{n-i}+(n-1)\gamma_{n,n-1}m_{n-1-i}+\cdots+\gamma_{n,1}m_{1-i})\nonumber\\
    &-i\alpha_0(m_{n-i}+\gamma_{n,n-1}m_{n-1-i}+\cdots+\gamma_{n,0}m_{-i})+\alpha_1(m_{n-i+1}+\gamma_{n,n-1}m_{n-i}+\cdots+\gamma_{n,0}m_{1-i})\nonumber\\
    &+\alpha_2(m_{n-i-1}+\gamma_{n,n-1}m_{n-2-i}+\cdots+\gamma_{n,0}m_{-1-i})\nonumber\\
    =&\langle \frac{d}{dt}P_n(\lambda,t)-i\alpha_0P_n(\lambda,t)+\alpha_1\lambda P_n(\lambda,t)+\alpha_2\lambda^{-1}P_n(\lambda,t),\lambda^i \rangle\nonumber\\
    =&\langle \frac{d}{dt}P_n(\lambda,t)+\alpha_1\lambda P_n(\lambda,t)+\alpha_2\lambda^{-1}P_n(\lambda,t),\lambda^i \rangle,\label{br0}
\end{align}
where $\dot{}$ denotes the derivative with respect to $t$. With the aid of the biorthogonality condition \eqref{orthogonal-pi} and the recurrence relation \eqref{3-term},
we have
\begin{align}\label{br1}
\langle \alpha_1\lambda P_n(\lambda,t),\lambda^i \rangle =& \langle -\alpha_1\lambda a_n P_{n-1}(\lambda,t)+P_{n+1}(\lambda,t)+b_nP_{n}(\lambda,t),\lambda^i \rangle\nonumber\\
  =&- \langle \alpha_1\lambda a_n P_{n-1}(\lambda,t),\lambda^i\rangle\nonumber\\
  =&- \langle \alpha_1a_n(\lambda  P_{n-1}(\lambda,t)-P_n(\lambda,t)),\lambda^i\rangle,
\end{align}
and
\begin{align}\label{br2}
    \langle \alpha_2\lambda^{-1}P_n(\lambda,t),\lambda^i \rangle=&\alpha_2b_n^{-1}\langle \lambda^{-1}(P_{n+1}(\lambda,t)+b_nP_n(\lambda,t)),\lambda^i \rangle\nonumber\\
    =&\alpha_2b_n^{-1}\langle P_{n}(\lambda,t)+a_nP_{n-1}(\lambda,t),\lambda^i \rangle\nonumber\\
    =&\alpha_2b_n^{-1}a_n\langle P_{n-1}(\lambda,t),\lambda^i \rangle.
\end{align}
    Since $\frac{d\lambda}{dt}=\alpha_0\lambda$, it is straightforward to see that $\frac{d}{dt}P_n(\lambda,t)=n\alpha_0P_n(\lambda,t)+\sum_{j=0}^{n-1}\beta_jP_j(\lambda,t)$. Substituting \eqref{br1} and \eqref{br2} into \eqref{br0} gives
    \begin{equation*}
        \langle \sum_{j=0}^{n-1}\beta_jP_j(\lambda,t)-\alpha_1a_n(\lambda  P_{n-1}(\lambda,t)-P_n(\lambda,t))+\alpha_2b_n^{-1}a_n P_{n-1}(\lambda,t),\lambda^i \rangle =0.
    \end{equation*}
  Given the validity of the above equation for $i=0,1,\cdots,n-1$, and by applying the biorthogonality condition \eqref{orthogonal-pi}, we find that
  \begin{equation*}
    \sum_{j=0}^{n-1}\beta_jP_j(\lambda,t)=\alpha_1a_n(\lambda  P_{n-1}(\lambda,t)-P_n(\lambda,t))-\alpha_2b_n^{-1}a_n P_{n-1}(\lambda,t) , 
  \end{equation*}
  so
  \begin{equation}\label{time evo 1}
     \frac{d}{dt}P_n(\lambda,t)=n\alpha_0P_n(\lambda,t)+ \alpha_1a_n(\lambda  P_{n-1}(\lambda,t)-P_n(\lambda,t))-\alpha_2b_n^{-1}a_n P_{n-1}(\lambda,t).
  \end{equation}
This leads us directly to
\begin{align}
   & \frac{d}{dt}P_n(\lambda,t)+a_n\frac{d}{dt}P_{n-1}(\lambda,t)\nonumber\\
=&n\alpha_0 P_n(\lambda,t)+((n-1)\alpha_0a_n+\alpha_1a_n(b_{n-1}-a_{n-1})-\alpha_2b_n^{-1}a_n)P_{n-1}(\lambda,t)-\alpha_2 a_nb_{n-1}^{-1}a_{n-1}P_{n-2}(\lambda,t).\nonumber
\end{align}
Therefore, we have completed the proof.
\end{proof}

We can also re-express the above lemma in the concrete matrix case introduced in Section~\ref{sec:Matrix Laurent bi-OPs}. Under the time evolution condition \eqref{derivative-of-x} for the spectral parameter $\lambda$, we may further assume that the measure $d\mu(\lambda)$ in \eqref{orthogonality condition} can be written as $d\mu(\lambda) = w(\lambda,t)\,d\lambda$, where $w(\lambda,t)$ is a $p \times p$ matrix-valued weight function. Then the moments \eqref{moments-matrix} also depend on the time variable $t$ and can be rewritten as
\begin{align}
m_j(t) = \int \lambda^j \, d\mu(\lambda,t) = \int \lambda^j w(\lambda,t) \, d\lambda.
\end{align}
Then, we have
\begin{align}
\frac{d}{dt} m_j(t) &= \int \lambda^j \left( \alpha_0 j \, w(\lambda,t) + \frac{d}{dt} w(\lambda,t) + \alpha_0 w(\lambda,t) \right) d\lambda \nonumber \\
&= \alpha_0 j \, m_j(t) + \int \lambda^j \left( \frac{d}{dt} w(\lambda,t) + \alpha_0 w(\lambda,t) \right) d\lambda.
\end{align}
Therefore, the assumption \eqref{moments evolution} becomes
\begin{equation}
    \int \lambda^j \left(\frac{d}{dt} w(\lambda,t) + \alpha_0 w(\lambda,t) \right) d \lambda = \alpha_1 m_{j+t}(t) + \alpha_2 m_{j-1}(t) = \int \lambda^j (\alpha_1 \lambda + \alpha_2 \lambda^{-1}) w(\lambda,t) d \lambda.
\end{equation}
Clearly, a sufficient condition for the above identity is
\begin{align}\label{weight ass}
\frac{d}{dt} w(\lambda,t) + \alpha_0 w(\lambda,t) = (\alpha_1 \lambda + \alpha_2 \lambda^{-1}) w(\lambda,t).
\end{align}
Thus, we obtain the following corollary.

\begin{coro}
With the assumptions \eqref{derivative-of-x} and \eqref{weight ass}, the matrix Laurent bi-OPs  $\{P_n(\lambda,t)\}_{n\in \mathbb{N}}$ defined in \eqref{orthogonality condition} satisfy the time evolution \eqref{p-times}.

% With the biorthogonality condition \eqref{orthogonality condition}, three-term recurrence relation \eqref{3-term}, and assumptions \eqref{derivative-of-x} and \eqref{weight ass}, one can prove that the matrix Laurent bi-OPs  $\{P_n(\lambda,t)\}_{n\in \mathbb{N}}$ given by \eqref{orthogonality condition} satisfy time evolution \eqref{p-times}.
\end{coro}

Now we are ready to derive the noncommutative nonisospectral mixed rToda lattice with the help of the compatibility condition between three-term recurrence relation \eqref{3-term} and the time evolution \eqref{p-times}. We obtain the following theorem.

\begin{thm}\label{nnmrt}
Under assumptions \eqref{derivative-of-x} and \eqref{moments evolution}, the recurrence coefficients $\{a_n\}_{n\in \mathbb{N}}$ and $\{b_n\}_{n\in \mathbb{N}}$  defined in \eqref{an bn}
satisfy the following noncommutative nonisospectral mixed rToda  lattice
\begin{subequations}\label{rtoda}
\begin{align}
   & \frac{d}{dt}a_n=\alpha_0 a_n+\alpha_1(-a_{n+1}a_n+a_na_{n-1}+b_na_n-a_nb_{n-1})+\alpha_2(b_n^{-1}a_n-a_nb_{n-1}^{-1}),\label{rtoda1}\\
   & \frac{d}{dt}b_n=\alpha_0b_n+\alpha_1(b_na_n-a_{n+1}b_n)+\alpha_2(b_{n+1}^{-1}a_{n+1}-a_nb_{n-1}^{-1}).\label{rtoda2}
   \end{align}
\end{subequations}

\end{thm}

\begin{proof}
  % Recalling that $$\lambda A P=BP,$$
 We can rewrite \eqref{p-times} in matrix form as  
\begin{align}\label{lax-t}
  A \frac{d}{dt}P=LP  
\end{align}
% $$ A \frac{d}{dt}P=LP $$,
with
\begin{equation*}
    L=\left(
\begin{array}{ccccc}
 0 &  &  &  &   \\
 f_1-\alpha_2b_{1}^{-1}a_1 & \alpha_0 &  &   &  \\
 -\alpha_2a_2b_{1}^{-1}a_1 & f_2-\alpha_2b_{2}^{-1}a_2 & 2\alpha_0 &    & \\
 &-\alpha_2a_3b_{2}^{-1}a_2 & f_3-\alpha_2b_{3}^{-1}a_3 & 3\alpha_0 &     \\
  &   &  \ddots &  \ddots & \ddots  \ \\
\end{array}
\right),
\end{equation*}
where $ f_n=(n-1)\alpha_0 a_n+\alpha_1 a_n(b_{n-1}- a_{n-1})$. 
% The compatibility condition of \eqref{lax-r} and  \eqref{lax-t} yields 
% \begin{align}\label{AB}
% \frac{d}{d t}(A^{-1}B)=(\alpha_0 I+ A^{-1}L)A^{-1}B-A^{-1}BA^{-1}L,
% \end{align}
% where $I$ is the identity matrix,  and $A$ and $B$ are defined in \eqref{def:A-B}. 
%  Expanding and calculating equation \eqref{AB} leads to noncommutative nonisospectral mixed rToda  lattice \eqref{rtoda}.

 We now proceed to derive the noncommutative nonisospectral mixed rToda lattice.
We first consider the $(n+1)$-th row of the matrix equation \eqref{lax-t}:
\begin{align}
& (a_n, 1) \begin{pmatrix} \frac{d}{dt}P_{n-1}(\lambda,t), & \frac{d}{dt}P_{n}(\lambda,t) \end{pmatrix}^T \nonumber \\
&= (-\alpha_2 a_n b_{n-1}^{-1} a_{n-1}, \; f_n - \alpha_2 b_n^{-1} a_n, \; n\alpha_0) \begin{pmatrix} P_{n-2}(\lambda,t), & P_{n-1}(\lambda,t), & P_n(\lambda,t) \end{pmatrix}^T. \label{Thm3.3-proof-formula1}
\end{align}
Using \eqref{time evo 1}, it is straightforward to get
\begin{align}
\frac{d}{dt}P_{n-1}(\lambda,t) = \bigl((n-1)\alpha_0 - \alpha_1 a_{n-1}\bigr) P_{n-1}(\lambda,t) + \bigl(\alpha_1 a_{n-1} \lambda - \alpha_2 b_{n-1}^{-1} a_{n-1}\bigr) P_{n-2}(\lambda,t). \nonumber
\end{align}
To obtain an equation involving only $P_{n-1}(\lambda,t)$ and $P_n(\lambda,t)$, we make use of the recurrence relation \eqref{3-term} to get
\begin{equation*}
P_{n-2}(\lambda,t) = \lambda^{-1} a_{n-1}^{-1} \bigl((\lambda - b_{n-1}) P_{n-1}(\lambda,t) - P_n(\lambda,t)\bigr).
\end{equation*}
Then, combining the above two formulas, we have
\begin{align}\label{time pn-1}
\frac{d}{dt}P_{n-1}(\lambda,t)
&= \bigl((n-1)\alpha_0 - \alpha_1 a_{n-1}\bigr) P_{n-1}(\lambda,t) \nonumber \\
&\quad - \bigl(\alpha_1 a_{n-1}\lambda - \alpha_2 b_{n-1}^{-1} a_{n-1}\bigr) \lambda^{-1} a_{n-1}^{-1} \bigl((\lambda - b_{n-1}) P_{n-1}(\lambda,t) - P_n(\lambda,t)\bigr) \nonumber \\
&= \bigl(\alpha_1 - \alpha_2 \lambda^{-1} b_{n-1}^{-1}\bigr) P_n(\lambda,t) \nonumber \\
&\quad + \bigl((n-1)\alpha_0 - \alpha_1 (\lambda + a_{n-1} - b_{n-1}) - \alpha_2 (\lambda^{-1} - b_{n-1}^{-1})\bigr) P_{n-1}(\lambda,t).
\end{align}
Using the above formula, we can rewrite \eqref{Thm3.3-proof-formula1} as a $2 \times 2$ matrix identity for the time evolution of the noncommutative Laurent bi-OPs $\{P_n(\lambda,t)\}_{n \in \mathbb{N}}$:
\begin{align}\label{lax-t2}
\frac{d}{dt}\psi_n = V_n \psi_n,
\end{align}
with $\psi_n = (P_{n-1}(\lambda), P_n(\lambda))^T$ and
\begin{align}\label{Vn}
V_n = \begin{pmatrix}
(n-1)\alpha_0 - \alpha_1(\lambda - b_{n-1} + a_{n-1}) - \alpha_2(\lambda^{-1} - b_{n-1}^{-1}) & \alpha_1 - \alpha_2 \lambda^{-1} b_{n-1}^{-1} \\[4pt]
\alpha_1 \lambda a_n - \alpha_2 b_n^{-1} a_n & n\alpha_0 - \alpha_1 a_n
\end{pmatrix}.
\end{align}

Next, from the $(n+1)$-th row of the matrix equation \eqref{lax-r}, we obtain
\[
\lambda (a_n, 1) \begin{pmatrix} P_{n-1}(\lambda,t), & P_n(\lambda,t) \end{pmatrix}^T = (b_n, 1) \begin{pmatrix} P_n(\lambda,t), & P_{n+1}(\lambda,t) \end{pmatrix}^T,
\]
which can also be rewritten in the form of a $2 \times 2$ matrix as follows:
\begin{align}\label{lax-r2}
\psi_{n+1} = U_n \psi_n,
\end{align}
with
\begin{align}\label{un}
U_n = \begin{pmatrix}
0 & 1 \\
\lambda a_n & \lambda - b_n
\end{pmatrix}.
\end{align}

Finally, the compatibility condition of \eqref{lax-t2} and \eqref{lax-r2} yields
\begin{align}\label{AB}
\frac{d}{dt} U_n = V_{n+1} U_n - U_n V_n.
\end{align}
By examining the individual entries of the above equation, one easily finds that the $(1,1)$ and $(1,2)$ entries are identities, while the $(2,1)$ and $(2,2)$ entries are exactly the noncommutative nonisospectral mixed rToda lattice \eqref{rtoda1} and \eqref{rtoda2}, respectively. 

This completes the proof of the theorem.
\end{proof}

\begin{rmk}
Notably, \eqref{rtoda} represents a noncommutative nonisospectral generalized rToda  lattice, which incorporates both the positive and negative flows of the noncommutative isospectral rToda  lattice. Specifically:
\begin{itemize}
    \item  When $\alpha_1=\alpha_2=0$, \eqref{rtoda} reduces to the first flow of the noncommutative nonisospectral rToda;
    \item When $\alpha_0=\alpha_2=0$, \eqref{rtoda} reduces to the first flow of the positive flow in the noncommutative isospectral rToda lattice \cite{wang2025noncommutative};
    \item When $\alpha_0=\alpha_1=0$, \eqref{rtoda} degenerates to the first flow of the negative flow in the noncommutative isospectral rToda lattice \cite{wang2025noncommutative}.
\end{itemize}
% When $\alpha_0=\alpha_2=0$, \eqref{rtoda} reduces to the first positive isospectral rToda lattice. Similarly, \eqref{rtoda} corresponds to the first negative flow of the noncommutative isospectral rToda when $\alpha_0=\alpha_1=0$.
\end{rmk}

\subsection{The matrix dP equation}\label{subsection-toda-2}
In this section, we directly perform stationary reduction on the noncommutative nonisospectral mixed rToda lattice \eqref{rtoda} and its Lax pair, obtaining the matrix dP equation and its Lax pair, respectively. In fact, we can derive the following theorem.
% Previously, we obtained the Lax pair for the noncommutative nonisospectral rToda lattice
% \begin{align}
%     \lambda AP=BP,\quad \frac{d}{dt}P=MP,
% \end{align}
% where $M=A^{-1}L$. 
% Then, we can obtain the matrix discrete Painlev\'{e}-type equation from the linear system
% \begin{align}\label{lax-p}
%     \lambda AP=BP,\quad \frac{\partial}{\partial \lambda}P=\hat{M}P,
% \end{align}
% where $\hat{M}=M/(\frac{d}{dt}\lambda)=\frac{1}{\alpha_0 \lambda} M$.

% Building upon our earlier derivation of the Lax pair for the noncommutative nonisospectral rToda lattice
% \begin{align*}
%     \lambda AP=BP,\quad \frac{d}{dt}P=MP,
% \end{align*}
% with $M=A^{-1}L$, we now derive the matrix dP-type equation from the extended linear system
% \begin{align}\label{lax-p}
%     \lambda AP=BP,\quad \frac{\partial}{\partial \lambda}P=\hat{M}P,
% \end{align}
% where $\hat{M}=M/(\frac{d}{dt}\lambda)=\frac{1}{\alpha_0 \lambda} M$.

% Specifically, using the compatibility condition of \eqref{lax-p}, it is easy to obtain a matrix-form differential system
% \begin{align}
%     I+\lambda \hat{M}=\frac{\partial}{\partial \lambda}(A^{-1}B)+A^{-1}B\hat{M},
% \end{align}
% which gives rise to a matrix dP-type equation. In fact, we can derive the following theorem.

\begin{thm} \label{mdp}
Given assumptions \eqref{derivative-of-x} and \eqref{moments evolution}, the recurrence coefficients $\{a_n\}_{n\in \mathbb{N}}$ and $\{b_n\}_{n\in \mathbb{N}}$  defined in \eqref{an bn} satisfy the following matrix dP-type equation
\begin{subequations}\label{m-dp}
\begin{align}
   & \alpha_0 a_n+\alpha_1(-a_{n+1}a_n+a_na_{n-1}+b_na_n-a_nb_{n-1})+\alpha_2(b_n^{-1}a_n-a_nb_{n-1}^{-1})=0,\\
   & \alpha_0b_n+\alpha_1(b_na_n-a_{n+1}b_n)+\alpha_2(b_{n+1}^{-1}a_{n+1}-a_nb_{n-1}^{-1})=0.
   \end{align}
\end{subequations}
\end{thm}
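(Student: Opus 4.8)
The plan is to show that, under the stationary reduction of the nonisospectral flow, the recurrence coefficients satisfy \eqref{m-dp}, and to obtain this directly from the Lax-pair compatibility rather than by formally deleting the derivatives in \eqref{rtoda}. I would begin from the compatibility condition \eqref{AB}, namely $\frac{d}{dt}(A^{-1}B)=(\alpha_0 I+A^{-1}L)A^{-1}B-A^{-1}BA^{-1}L$, which was obtained in the proof of Theorem~\ref{nnmrt} and which already carries the full force of assumptions \eqref{derivative-of-x} and \eqref{moments evolution} through the matrices $A$, $B$, $L$ of \eqref{def:A-B}. The matrix $\mathcal{L}:=A^{-1}B$ is exactly the operator appearing in the spectral problem \eqref{lax-r}, while \eqref{lax-t} is its auxiliary linear flow, so the two together form the Lax pair whose reduction we analyze.

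The stationary reduction consists of demanding that the recurrence coefficients be independent of $t$, i.e.\ $\frac{d}{dt}a_n=\frac{d}{dt}b_n=0$ for every $n$, so that $\mathcal{L}$ is frozen along the flow and $\frac{d}{dt}\mathcal{L}=0$. Imposing this in \eqref{AB} turns the compatibility into the purely algebraic identity $(\alpha_0 I+A^{-1}L)A^{-1}B=A^{-1}BA^{-1}L$. I would then expand this identity entrywise using the bidiagonal forms of $A$ and $B$ and the explicit lower-triangular shape of $L$. Since this is the very same expansion already carried out in Theorem~\ref{nnmrt} to produce \eqref{rtoda}, the sole effect of the reduction is that the left-hand terms $\frac{d}{dt}a_n$ and $\frac{d}{dt}b_n$ are now absent; the surviving entries are precisely the first and second relations of \eqref{m-dp}.

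Two points would require attention. The first is the noncommutative bookkeeping: because $a_n$, $b_n$ and $b_n^{-1}$ do not commute, the left/right placement of every factor must be preserved when multiplying the triangular arrays $A^{-1}$, $B$ and $L$, so that the products appear in the exact order $b_n^{-1}a_n$, $a_nb_{n-1}^{-1}$, $b_{n+1}^{-1}a_{n+1}$ recorded in \eqref{m-dp}; reusing the ordering already fixed in the proof of Theorem~\ref{nnmrt} is the safest way to avoid errors here. The second, and the genuine conceptual obstacle, is to confirm that the stationary reduction is admissible rather than over-determined. I would check this by observing that \eqref{m-dp} is a well-posed second-order difference system: when $\alpha_1,\alpha_2\neq 0$ and the $a_n$ are invertible, the first relation of \eqref{m-dp} solves for $a_{n+1}$ in terms of $a_n,a_{n-1},b_n,b_{n-1}$, and the second then solves for $b_{n+1}$ in terms of $a_{n+1}$ and lower-index data. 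Hence zeroing the two flows simultaneously imposes no hidden contradiction, and \eqref{m-dp} stands as a bona fide matrix discrete Painlev\'{e}-type recursion; its realization by the quasideterminant coefficients for an appropriate initial choice of moments is a separate matter treated afterwards.
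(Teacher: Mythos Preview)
Your proposal is correct and follows essentially the same logic as the paper: both impose stationarity on the compatibility relation \eqref{AB} and expand entrywise to recover \eqref{m-dp}. The paper packages this slightly differently by first passing from the $t$-flow $\partial_t P=MP$ to a $\lambda$-flow $\partial_\lambda P=\hat{M}P$ with $\hat{M}=M/(\alpha_0\lambda)$ via \eqref{derivative-of-x}, and then reading \eqref{m-dp} off the compatibility of the pair $(\lambda AP=BP,\ \partial_\lambda P=\hat{M}P)$; this has the side benefit of exhibiting a Lax pair for \eqref{m-dp} in the spectral variable, but the algebraic content is identical to your stationary reduction of \eqref{AB}.
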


\begin{proof}

Recall the Lax pair \eqref{lax-t2} and \eqref{lax-r2} for the noncommutative nonisospectral mixed rToda lattice:
\begin{align*}
\psi_{n+1} = U_n \psi_n, \qquad \frac{d}{dt}\psi_n = V_n \psi_n,
\end{align*}
where $V_n$ and $U_n$ are defined in \eqref{Vn} and \eqref{un}, respectively. Now, we consider the matrix dP-type equation based on the extended linear system
\begin{align}\label{lax-p}
\psi_{n+1} = U_n \psi_n, \qquad \frac{\partial}{\partial \lambda}\psi_n = \hat{V}_n \psi_n,
\end{align}
where $\hat{V}_n = V_n \big/ \bigl(\frac{d}{dt}\lambda\bigr) = \frac{1}{\alpha_0 \lambda} V_n$.

Taking the partial derivative with respect to $\lambda$ on both sides of the first equation in \eqref{lax-p} yields
\begin{equation*}
\frac{\partial}{\partial \lambda}\psi_{n+1} = \frac{\partial}{\partial \lambda}U_n \psi_n + U_n \frac{\partial}{\partial \lambda}\psi_n.
\end{equation*}
Using \eqref{lax-p} to replace $\psi_{n+1}$ and $\frac{\partial}{\partial \lambda}\psi_n$ by $U_n \psi_n$ and $\hat{V}_n \psi_n$, respectively, we obtain from the above equation
\begin{equation}\label{matrix dP}
\frac{\partial}{\partial \lambda}U_n = \hat{V}_{n+1} U_n - U_n \hat{V}_n.
\end{equation}
Recalling the explicit expression of $U_n$ in \eqref{un}, we have
\[
\frac{\partial}{\partial \lambda}U_n = \begin{pmatrix}
0 & 0 \\
a_n & 1
\end{pmatrix}.
\]
Together with the relation $\hat{V}_n = \frac{1}{\alpha_0 \lambda} V_n$, we obtain \eqref{m-dp} from \eqref{matrix dP}. 

This completes the proof of the theorem.
\end{proof}

\begin{rmk}
    In the commutative case, the dP-type equation \eqref{m-dp} reduces to the following form \cite{yueXL2023thesis}
\begin{subequations}\label{hxdlt}
\begin{align}
&\alpha_0 a_n+\alpha_1 a_n(a_{n-1}-a_{n+1}+b_{n}-b_{n-1})+\alpha_2 a_n\left(\frac{1}{b_{n}}-\frac{1}{b_{n-1}}\right)=0,\\
&\alpha_0 b_{n}+\alpha_1 b_{n}(a_n-a_{n+1})+\alpha_2 \left(\frac{a_{n+1}}{b_{n+1}}-\frac{a_n}{b_{n-1}}\right)=0.
\end{align}
\end{subequations}
By further simplifying \eqref{hxdlt} , the alternate dP II (alt d-P$_{\text{II}}$) \cite{nijhoff1996study}
\begin{align}
\frac{\sqrt{-\frac{\alpha_1}{\alpha_2}}\frac{n\alpha_0+\beta_0}{\alpha_1}}{1+\frac{\sqrt{-\frac{\alpha_1}{\alpha_2}}}{b_{n+1}}\frac{\sqrt{-\frac{\alpha_1}{\alpha_2}}}{b_{n}}}+\frac{\sqrt{-\frac{\alpha_1}{\alpha_2}}\frac{(n-1)\alpha_0+\beta_0}{\alpha_1}}{1+\frac{\sqrt{-\frac{\alpha_1}{\alpha_2}}}{b_{n}}\frac{\sqrt{-\frac{\alpha_1}{\alpha_2}}}{b_{n-1}}}=-\frac{\sqrt{-\frac{\alpha_2}{\alpha_1}}}{b_{n}}+\sqrt{-\frac{\alpha_1}{\alpha_2}}b_{n}+\sqrt{-\frac{\alpha_1}{\alpha_2}}\frac{n\alpha_0+\gamma_0}{\alpha_1},\label{hxdlt6}
\end{align}
can be derived, where $\beta_0=\alpha_1a_1-\alpha_2 \frac{a_1}{b_{0}b_{1}},\ \gamma_0=\alpha_1(a_1-b_0)-\alpha_2\frac{1}{b_0}$. Therefore, we can regard \eqref{m-dp} as the matrix alt d-P$_{\text{II}}$.
\end{rmk}

\begin{rmk}
   Furthermore, it is readily observable that the matrix dP-type equation \eqref{m-dp} results from the stationary reduction of the noncommutative nonisospectral mixed rToda lattice \eqref{rtoda}. This implies that \eqref{m-dp} and its Lax pair can be obtained from the stationary reduction of the noncommutative nonisospectral mixed rToda lattice \eqref{rtoda} and its corresponding Lax pair, respectively.
\end{rmk}

\begin{rmk}
    We derive the above theorem from Theorem \ref{nnmrt}, which applies to general noncommutative Laurent bi-OPs. Consequently, in addition to the matrix dP-type equations, the equations in \eqref{m-dp} also hold in a more general noncommutative setting.
\end{rmk}

\subsection{Quasideterminant solutions for the matrix dP equation}\label{subsection-toda-3}
In the previous sections, we established the  noncommutative nonisospectral mixed rToda  lattice \eqref{rtoda} by applying appropriate time evolution on the spectral parameter and moments without defining weight functions. Subsequently, a formal stationary reduction was implemented on the Lax pair of the obtained integrable equation \eqref{rtoda}, leading to the derivation of the matrix dP-type equation \eqref{m-dp}. In the next step, 
% we will construct a specific weight function and show that the stationary reduction can be achieved from the solution perspective.
we will construct a specific weight function to produce quasideterminant solutions for matrix dP equation,
thereby justifying the validity of the stationary reduction from the solution perspective.

% In the previous sections, we established the  noncommutative nonisospectral mixed rToda  lattice \eqref{rtoda} by applying appropriate time evolution on the spectral parameter and moments without defining weight functions. Building upon this foundation, we subsequently construct the Lax pair \eqref{lax-p}, from which the matrix discrete Painlev\'{e}-type equation \eqref{m-dp} itself is derived. A straightforward comparison reveals that the noncommutative nonisospectral mixed rToda lattice \eqref{rtoda} and matrix discrete Painlev\'{e}-type equation \eqref{m-dp} differ principally in their treatment of the time derivative of $a_n$ and $b_n$.
% In the next step, we will construct a specific weight function to establish quasideterminant solutions for the matrix discrete Painlev\'{e}-type equation \eqref{m-dp}.

Since the time evolution condition \eqref{derivative-of-x} imposed on the spectral parameter, it follows readily that
\begin{equation*}
\lambda=\lambda_0e^{\alpha_0 t},
\end{equation*}
where $\lambda_0$ is the spectral parameter at the initial time and is independent of the time variable $t$. Consider the integration interval $(0,+\infty)$, then the moments can be rewritten as
\begin{align}
    m_j(t)=&\int_0^{+\infty} \lambda^jw(\lambda,t)d\lambda\nonumber\\
    =&\int_0^{+\infty} \lambda_0^je^{j\alpha_0 t}f(\lambda_0,t)d\lambda_0,\nonumber
\end{align}
with $f(\lambda_0,t)$ as the undetermined deformed weight function. Taking the derivative of the above moments with respect to time $t$ yields
\begin{equation*}
\begin{aligned}
\frac{d}{dt} m_j=j\alpha_0m_j+\int_{0} ^{+\infty} \lambda_0^j e^{j\alpha_0 t} \left(\frac{d}{dt} f(\lambda_0,t)\right)d\lambda_0.
\end{aligned}
\end{equation*}
To ensure consistency with the time evolution condition \eqref{moments evolution} set earlier, we have
\begin{equation}
    \frac{d}{dt}f(\lambda_0,t)=(\alpha_1\lambda_0 e^{\alpha_0 t}+\alpha_2\lambda_0^{-1} e^{-\alpha_0 t})f(\lambda_0,t).\nonumber
\end{equation}
% Thus, we have justification to assume that
From the above equation, it is easy to obtain that
\begin{equation}
    f(\lambda_0,t)=e^{\frac{\alpha_1}{\alpha_0}\lambda_0 e^{\alpha_0 t}-\frac{\alpha_2}{\alpha_0}\lambda_0^{-1} e^{-\alpha_0 t}}U,
\end{equation}
where $U$ is a matrix independent of $t$. Here, we take $U=Ve^{C\ln \lambda_0}$, allowing for appropriate choices of the constant matrices $V$ and $C$ to ensure that the moments $m_j$ remain nontrivial, which implies that it is noncommutative and cannot be diagonalized.

\begin{rmk}
An appropriate matrix valued weight function (taking a $2\times 2$ example) can be easily provided. Setting
    $$V=\left(
\begin{array}{cc}
1&0\\
1&1
\end{array}\right),\quad C=\left(
\begin{array}{cc}
1&1\\
0&1
\end{array}\right),$$
we have 
$$U=\left(
\begin{array}{cc}
\lambda_0&\lambda_0 \ln{\lambda_0}\\
\lambda_0&\lambda_0+\lambda_0 \ln{\lambda_0}
\end{array}\right),$$
then
\begin{align}
     m_j(t)=&\int_0^{+\infty} \lambda_0^je^{j\alpha_0 t}e^{\frac{\alpha_1}{\alpha_0}\lambda_0e^{\alpha_0t}-\frac{\alpha_2}{\alpha_0}\lambda_0^{-1}e^{-\alpha_0 t}}U d\lambda_0\nonumber\\
     =&\int_0^{+\infty} \lambda^je^{\frac{\alpha_1}{\alpha_0}\lambda-\frac{\alpha_2}{\alpha_0}\lambda^{-1}-\alpha_0 t}U d\lambda\nonumber.
\end{align}
\end{rmk}

\begin{lem}
  Under the assumption of \eqref{derivative-of-x} 
  % and \eqref{moments evolution} 
  together with $\frac{\alpha_1}{\alpha_0}<0$ and $\frac{\alpha_2}{\alpha_0}>0$,  the moments 
\begin{equation}\label{new moment}
     m_j(t)=\int_0^{+\infty} \lambda_0^je^{j\alpha_0 t}e^{\frac{\alpha_1}{\alpha_0}\lambda_0e^{\alpha_0t}-\frac{\alpha_2}{\alpha_0}\lambda_0^{-1}e^{-\alpha_0 t}}Ve^{C \ln{\lambda_0}}d\lambda_0
\end{equation}
satisfy the time evolution \eqref{moments evolution} and 
 \begin{equation}\label{new time}
   \frac{d}{dt}m_j(t)=-\alpha_0m_j(t)(\mathbb{I}_p+C) . 
 \end{equation}
\end{lem}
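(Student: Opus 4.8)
The plan is to verify both identities by direct differentiation, using that apart from the fixed matrix factor $Ve^{C\ln\lambda_0}$ the integrand in \eqref{new moment} is scalar. For the first claim — that $m_j(t)$ satisfies \eqref{moments evolution} — I would differentiate under the integral sign. Writing the integrand as $g_j(\lambda_0,t)=\lambda_0^je^{j\alpha_0 t}\,e^{\frac{\alpha_1}{\alpha_0}\lambda_0e^{\alpha_0 t}-\frac{\alpha_2}{\alpha_0}\lambda_0^{-1}e^{-\alpha_0 t}}\,Ve^{C\ln\lambda_0}$, the factor $Ve^{C\ln\lambda_0}$ is $t$-independent, so all $t$-dependence is scalar: $\partial_t(\lambda_0^je^{j\alpha_0 t})$ contributes the multiplicative factor $j\alpha_0$, while $\partial_t$ of the exponent contributes exactly $\alpha_1\lambda_0e^{\alpha_0 t}+\alpha_2\lambda_0^{-1}e^{-\alpha_0 t}$ (the $\alpha_0^{-1}$'s cancel against $\partial_te^{\pm\alpha_0 t}=\pm\alpha_0e^{\pm\alpha_0 t}$). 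Hence $\partial_t g_j=\big(j\alpha_0+\alpha_1\lambda_0e^{\alpha_0 t}+\alpha_2\lambda_0^{-1}e^{-\alpha_0 t}\big)g_j$. The crucial algebraic remark is that multiplying $g_j$ by $\lambda_0 e^{\alpha_0 t}$ (resp.\ by $\lambda_0^{-1}e^{-\alpha_0 t}$) affects only the prefactor $\lambda_0^je^{j\alpha_0 t}$ and turns $g_j$ into $g_{j+1}$ (resp.\ $g_{j-1}$). Integrating then gives $\frac{d}{dt}m_j(t)=j\alpha_0 m_j(t)+\alpha_1 m_{j+1}(t)+\alpha_2 m_{j-1}(t)$, which is \eqref{moments evolution}.

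For the second claim, \eqref{new time}, I would substitute $\lambda=\lambda_0 e^{\alpha_0 t}$ in \eqref{new moment} at fixed $t$ (recall \eqref{derivative-of-x} gives $\lambda(t)=\lambda_0 e^{\alpha_0 t}$). Then $\lambda_0^je^{j\alpha_0 t}=\lambda^j$, $\frac{\alpha_1}{\alpha_0}\lambda_0 e^{\alpha_0 t}=\frac{\alpha_1}{\alpha_0}\lambda$, $\frac{\alpha_2}{\alpha_0}\lambda_0^{-1}e^{-\alpha_0 t}=\frac{\alpha_2}{\alpha_0}\lambda^{-1}$, $e^{C\ln\lambda_0}=e^{C\ln\lambda}\,e^{-\alpha_0 Ct}$ (since $C\ln\lambda$ and $\alpha_0 Ct$ commute), and $d\lambda_0=e^{-\alpha_0 t}d\lambda$, so the integral factors as
\begin{equation*}
m_j(t)=e^{-\alpha_0 t}\,N_j\,e^{-\alpha_0 Ct},\qquad N_j:=\int_0^{+\infty}\lambda^j\,e^{\frac{\alpha_1}{\alpha_0}\lambda-\frac{\alpha_2}{\alpha_0}\lambda^{-1}}\,Ve^{C\ln\lambda}\,d\lambda ,
\end{equation*}
where $N_j$ is independent of $t$: the scalar $e^{-\alpha_0 t}$ pulls out on the left, and $e^{-\alpha_0 Ct}$, being a function of $C$ alone, pulls out on the right of the $\lambda$-integral. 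Differentiating and using that $C$ commutes with $e^{-\alpha_0 Ct}$ gives $\frac{d}{dt}m_j(t)=-\alpha_0e^{-\alpha_0 t}N_je^{-\alpha_0 Ct}-\alpha_0 e^{-\alpha_0 t}N_je^{-\alpha_0 Ct}C=-\alpha_0 m_j(t)-\alpha_0 m_j(t)C=-\alpha_0 m_j(t)(\mathbb{I}_p+C)$, which is \eqref{new time}.

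The one genuinely technical point — and the step I would write out most carefully — is justifying the interchange of $\frac{d}{dt}$ with the integral and the convergence of $N_j$, and this is exactly where the hypotheses $\frac{\alpha_1}{\alpha_0}<0$ and $\frac{\alpha_2}{\alpha_0}>0$ are used. The factor $e^{\frac{\alpha_1}{\alpha_0}\lambda}$ decays exponentially as $\lambda\to+\infty$ and $e^{-\frac{\alpha_2}{\alpha_0}\lambda^{-1}}$ decays faster than any power of $\lambda$ as $\lambda\to0^+$, so together they dominate the at-most polynomial-times-logarithm growth of $\lambda^{\pm 1}\|Ve^{C\ln\lambda}\|$ (the entries of $e^{C\ln\lambda}$ are, via the Jordan form of $C$, polynomials in $\ln\lambda$ times real powers of $\lambda$), for every $j\in\mathbb{Z}$. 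For $t$ ranging over a compact interval $e^{\alpha_0 t}$ is bounded between positive constants, so one obtains a single integrable dominating function valid for $g_j$ and $\partial_t g_j$ uniformly in $t$; dominated convergence then legitimizes all the manipulations above, and the two identities follow from the computations just described.
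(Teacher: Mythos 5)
Your proof is correct. The first half (verifying \eqref{moments evolution}) is essentially the same computation the paper performs when it constructs the deformed weight: all $t$-dependence of the integrand is scalar, differentiation produces the factor $j\alpha_0+\alpha_1\lambda_0e^{\alpha_0 t}+\alpha_2\lambda_0^{-1}e^{-\alpha_0 t}$, and the two exponential terms shift $j$ to $j\pm1$. For the second half you take a genuinely different route. The paper integrates by parts in $\lambda_0$, using the sign hypotheses to kill the boundary terms, and obtains the intermediate moment relation $\alpha_1 m_{j+1}=-\alpha_0(j+1)m_j-\alpha_2 m_{j-1}-\alpha_0 m_jC$; substituting this back into \eqref{moments evolution} then yields \eqref{new time}. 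You instead change variables $\lambda=\lambda_0e^{\alpha_0 t}$ to factor out the $t$-dependence entirely, writing $m_j(t)=e^{-\alpha_0 t}N_je^{-\alpha_0 Ct}$ with $N_j$ constant, after which \eqref{new time} is a one-line differentiation. Your factorization is arguably cleaner: it needs only convergence of $N_j$ (guaranteed by $\alpha_1/\alpha_0<0$ and $\alpha_2/\alpha_0>0$, exactly as you argue) rather than a separate boundary-term analysis, and it exhibits the exact $t$-dependence of every moment, which also makes the stationarity of $a_n$ and $b_n$ in the subsequent theorem transparent. What the paper's route buys in exchange is the three-term relation among the moments as a byproduct, a string-equation-type identity of independent interest. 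Your handling of the noncommutative factor, $e^{C\ln\lambda_0}=e^{C\ln\lambda}e^{-\alpha_0 Ct}$ (legitimate because both exponents are scalar multiples of $C$, hence commute), and of the domination estimates for differentiating under the integral sign, is sound.
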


\begin{proof}
By conducting integration by parts on the moments \eqref{new moment} with respect to $\lambda_0$, we directly arrive at

\begin{align}\label{part}
m_j(t)=&\int_{0} ^{+\infty}  \lambda_0^j e^{j\alpha_0 t} e^{\frac{\alpha_1}{\alpha_0} \lambda_0 e^{\alpha_0 t}-\frac{\alpha_2}{\alpha_0}\lambda_0^{-1}e^{-\alpha_0 t}}V e^{C\ln{\lambda_0}}d\lambda_0\nonumber\\
=&\lambda_0^{j+1}e^{j\alpha_0 t} e^{\frac{\alpha_1}{\alpha_0} \lambda_0 e^{\alpha_0 t}-\frac{\alpha_2}{\alpha_0}\lambda_0^{-1}e^{-\alpha_0 t}}V e^{C\ln{\lambda_0}}|^{+\infty}_{0}\nonumber\\
&-\int_{0} ^{+\infty}j \lambda_0^je^{j\alpha_0 t} e^{\frac{\alpha_1}{\alpha_0} \lambda_0 e^{\alpha_0 t}-\frac{\alpha_2}{\alpha_0}\lambda_0^{-1}e^{-\alpha_0 t}}V e^{C\ln{\lambda_0}}d\lambda_0\nonumber\\
&-\int_{0} ^{+\infty} \lambda_0^{j+1}e^{j\alpha_0 t} (\frac{\alpha_1}{\alpha_0} e^{\alpha_0 t}+\frac{\alpha_2}{\alpha_0}\lambda_0^{-2}e^{-\alpha_0 t})e^{\frac{\alpha_1}{\alpha_0} \lambda_0 e^{\alpha_0 t}-\frac{\alpha_2}{\alpha_0}\lambda_0^{-1}e^{-\alpha_0 t}}V e^{C\ln{\lambda_0}}d\lambda_0\nonumber\\
&-\int_{0} ^{+\infty} \lambda_0^{j}e^{j\alpha_0 t} e^{\frac{\alpha_1}{\alpha_0} \lambda_0 e^{\alpha_0 t}-\frac{\alpha_2}{\alpha_0}\lambda_0^{-1}e^{\alpha_0 t}}V e^{C\ln{\lambda_0}}Cd\lambda_0\nonumber\\
=&-jm_j(t)-\frac{\alpha_1}{\alpha_0}m_{j+1}(t)-\frac{\alpha_2}{\alpha_0}m_{j-1}(t)-m_j(t)C,
\end{align}
where we used the fact at the boundary
\begin{align}
    \lim_{\lambda_0\rightarrow 0 }\lambda_0^{j+1}f(\lambda_0,t)=\lim_{\lambda_0\rightarrow +\infty }\lambda_0^{j+1}f(\lambda_0,t)=0.\nonumber
\end{align}
From equation\eqref{part}, it is clear that 
\begin{equation}\label{moment re}
    \alpha_1 m_{j+1}(t)=-\alpha_0(j+1) m_j(t)-\alpha_2m_{j-1}(t)-\alpha_0m_j(t)C.
\end{equation}
By combining equations \eqref{moment re} and \eqref{moments evolution}, one can easily obtain 

$$\frac{d}{dt}m_j(t)=-\alpha_0m_j(t)(\mathbb{I}_p+C) .$$
Therefore, we have completed the proof.
\end{proof}

Ultimately, we will demonstrate that $a_n(t)$ and $b_n(t)$ in \eqref{an bn} correspond to the solutions of the matrix dP-type equation \eqref{m-dp}. This implies that 
the noncommutative nonisospectral mixed rToda lattice and its Lax pair can indeed be stationary. This conclusion will be reached by utilizing the time evolution \eqref{new time} of the moments and the properties of the quasideterminants. This leads us to the following theorem.

\begin{thm}
  With the definition of moments \eqref{new moment}, the recurrence coefficients $\{a_n(t)\}$ and $\{b_n(t)\}$ given in \eqref{an bn} satisfy
  \begin{align}
      \frac{d}{dt}a_n(t)=\frac{d}{dt}b_n(t)=0,
  \end{align}
and the
coefficients $\gamma_{n,j}$ in the expansion $P_n(\lambda,t)=\sum_{j=0}^{n}\gamma_{n,j}(t)\lambda^j$ also satisfy
\begin{align}
      \frac{d}{dt}\gamma_{n,j}(t)=0.
  \end{align}
% where $\{a_n(t)\}$ and $\{b_n(t)\}$ are the recurrence coefficients for the Matrix Laurent bi-OPs \eqref{3-term}, with their expressions given in \eqref{an bn} and $\{\gamma_{n,j}(t)\}$ are the
% coefficients in the expansion $P_n(\lambda,t)=\sum_{j=0}^{n}\gamma_{n,j}(t)\lambda^j$. 
% This demonstrates that $a_n(t)$ and $b_n(t)$ in \eqref{an bn} indeed constitute solutions to the matrix dP-type equation \eqref{m-dp}.
\end{thm}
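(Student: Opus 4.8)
The plan is to show that the quasideterminant expressions for $\tau_n$, $H_n$, and the polynomial coefficients $\gamma_{n,j}$ are all constant in $t$ when the moments are given by \eqref{new moment}, and then conclude via \eqref{an bn}. The key input is the remarkably simple time evolution \eqref{new time}, namely $\frac{d}{dt}m_j(t) = -\alpha_0 m_j(t)(\mathbb{I}_p + C)$, which says that differentiation in $t$ acts on every moment by right-multiplication by the \emph{same} fixed matrix $-\alpha_0(\mathbb{I}_p+C)$. First I would set $K := -\alpha_0(\mathbb{I}_p + C)$ so that $\dot m_j = m_j K$ for all $j$. Then every entry in the first $n$ columns of the quasideterminant matrices $\Lambda_{n-1}$, $\theta_n^n$, $(\tilde\theta_j^i)^T$, etc., differentiates by right-multiplication by $K$, i.e. if $M(t)$ denotes the relevant matrix of moments, $\dot M = M(I_n \otimes K)$ in a suitable block sense — more precisely, $\dot M$ equals $M$ with every block-entry right-multiplied by $K$.

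The main step is to feed this structure into the derivative formula for quasideterminants, equation \eqref{derivative-formulas-for-quasideterminants}. Consider $\tau_n = \left|\begin{smallmatrix}\Lambda_{n-1} & (\tilde\theta_1^{n-1})^T \\ \theta_n^n & \boxed{m_{n+1}}\end{smallmatrix}\right|$. I would apply the first line of \eqref{derivative-formulas-for-quasideterminants} with $A = \Lambda_{n-1}$, $B = (\tilde\theta_1^{n-1})^T$, $C = \theta_n^n$, $d = m_{n+1}$: then $B' = BK$ (entrywise right-mult by $K$), $C' = CK$, $d' = dK$, and each column derivative $(A_k)' = A_k K$, $(C_k)' = C_k K$. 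The first term $\left|\begin{smallmatrix}A & B' \\ C & \boxed{d'}\end{smallmatrix}\right| = \left|\begin{smallmatrix}A & BK \\ C & \boxed{dK}\end{smallmatrix}\right|$; by the elementary quasideterminant identity for right-multiplying the last column by an invertible matrix, this equals $\left(\left|\begin{smallmatrix}A & B \\ C & \boxed{d}\end{smallmatrix}\right|\right) K = \tau_n K$. The remaining sum $\sum_k \left|\begin{smallmatrix}A & A_k K \\ C & \boxed{C_k K}\end{smallmatrix}\right| \left|\begin{smallmatrix}A & B \\ e_k & \boxed{0}\end{smallmatrix}\right|$: the first factor in the $k$-th summand has its \emph{last column} equal to the $k$-th column of the matrix $\left(\begin{smallmatrix}A\\ C\end{smallmatrix}\right)$ right-multiplied by $K$, but that $k$-th column already appears inside $A$/$C$, so the quasideterminant $\left|\begin{smallmatrix}A & A_kK \\ C & \boxed{C_kK}\end{smallmatrix}\right|$ vanishes — it is a quasideterminant with two (right-proportional) repeated columns. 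Hence the whole correction sum is $0$, and $\dot\tau_n = \tau_n K$. The identical argument applied to $H_n$ (whose bordering column is $(\tilde\theta_{-n}^{n-1})^T$ with bottom entry $m_0$, all differentiating by right-$K$) gives $\dot H_n = H_n K$.

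Finally I would assemble the pieces: from \eqref{an bn}, $a_n = -\tau_n \tau_{n-1}^{-1}$, so $\dot a_n = -\dot\tau_n\tau_{n-1}^{-1} + \tau_n\tau_{n-1}^{-1}\dot\tau_{n-1}\tau_{n-1}^{-1} = -\tau_n K \tau_{n-1}^{-1} + \tau_n\tau_{n-1}^{-1}\tau_{n-1}K\tau_{n-1}^{-1} = -\tau_n K\tau_{n-1}^{-1} + \tau_n K \tau_{n-1}^{-1} = 0$; likewise $\frac{d}{dt}(H_{n-1}H_n^{-1}) = 0$ by the same cancellation, so $\dot b_n = \dot a_n \cdot H_{n-1}H_n^{-1} + a_n\frac{d}{dt}(H_{n-1}H_n^{-1}) = 0$. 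For the coefficients $\gamma_{n,j}$, note that from \eqref{the-experssion-of-P_n} each $\gamma_{n,j}$ is the coefficient of $\lambda^j$ in $P_n$, which is itself a quasideterminant whose last column is $(0,\dots,0,1,0,\dots,0,\boxed{\,\cdot\,})^T$ — constant in $t$, not built from moments — while the moment block $\Lambda_{n-1}$, $\theta_n^n$ differentiates by right-$K$. Applying the second line of \eqref{derivative-formulas-for-quasideterminants} (which differentiates rows, leaving the constant last column alone and acting only through the moment rows $C' = CK$, $(A^k)'=A^kK$), the analogous two-repeated-rows cancellation forces $\dot\gamma_{n,j} = 0$ directly. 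The step I expect to be the main obstacle is establishing cleanly and with correct bookkeeping that the correction sum in \eqref{derivative-formulas-for-quasideterminants} vanishes; this requires the lemma that a quasideterminant with two columns (resp. rows) that are right- (resp. left-) proportional is zero, which is standard but should be stated explicitly, and care is needed because the proportionality factor $K$ sits on the correct side for the noncommutative argument to go through.
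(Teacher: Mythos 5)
Your argument for $\tau_n$ and $H_n$ is exactly the paper's: with $K=-\alpha_0(\mathbb{I}_p+C)$ every moment satisfies $\dot m_j=m_jK$, the first term of the column-differentiation line of \eqref{derivative-formulas-for-quasideterminants} factors as $\tau_nK$ (resp.\ $H_nK$) because right-multiplying the bordering column by $K$ pulls $K$ out on the right of the quasideterminant, and each summand of the correction sum vanishes because its last column is the $k$-th column right-multiplied by $K$, and a quasideterminant whose boxed column is a right multiple of another column is zero. The paper asserts these two identities without comment, so your explicit justification is a genuine improvement in rigor; the cancellations $\dot a_n=\bigl(-\tau_nK+\tau_n\tau_{n-1}^{-1}\tau_{n-1}K\bigr)\tau_{n-1}^{-1}=0$ and $\dot b_n=0$ then coincide with the paper's computation.

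The one step that does not go through as written is your treatment of $\gamma_{n,j}$ via the \emph{second} (row-differentiation) line of \eqref{derivative-formulas-for-quasideterminants}. The moment rows differentiate by \emph{right} multiplication by $K$ entrywise, i.e.\ $(A^k)'=A^k\,\mathrm{diag}(K,\dots,K)$, whereas the vanishing of a quasideterminant whose boxed row repeats another row requires that row to be a \emph{left} multiple of the other; so the ``repeated-rows cancellation'' you invoke sits on the wrong side, and the individual factors $\left|\begin{smallmatrix}A&e_j^T\\(A^k)'&\boxed{0}\end{smallmatrix}\right|=-A^k\,\mathrm{diag}(K,\dots,K)\,A^{-1}e_j^T$ need not vanish. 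The fix is immediate: use the first (column) line, exactly as for $\tau_n$ and $H_n$. Writing $\gamma_{n,j}=\left|\begin{smallmatrix}\Lambda_{n-1}&e_j^T\\ \theta_n^n&\boxed{0}\end{smallmatrix}\right|$, the bordering column is constant so the first term is zero, and the correction sum vanishes by the same right-proportional-columns argument; hence $\dot\gamma_{n,j}=0$. (Incidentally, the paper's displayed formula for $\gamma_{n,j}$ places the moments in the bordering column and $e_j$ in the bottom row, which is the transposed object --- a coefficient of $(Q_n)^*$ rather than of $P_n$ --- and with that form the column computation produces $\gamma_{n,j}K$ rather than $0$ for the first term; your form is the correct one for $P_n$ and makes the argument close cleanly.)
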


\begin{proof}
    The central argument relies on the use of \eqref{new time}, which is valid according to the definition of the moments in \eqref{new moment}.
It is evident that from \eqref{new time} and the derivative formula for quasideterminants \eqref{derivative-formulas-for-quasideterminants}, we can obtain
\begin{equation*}
\begin{aligned}
\frac{d}{dt}H_n=\left|
\begin{array}{cc}
\Lambda_{n-1}&((\tilde{\theta}_{-n}^{n-1})^T)^{'}\\
\theta_n^n&\boxed{(m_0)^{'}}
\end{array}\right|+\sum_{k=0}^{n-1}\left(\left|
\begin{array}{cc}
\Lambda_{n-1}&((\Lambda_{n-1})_k)'\\
\theta_n^n&\boxed{((\theta_n^n)_k)'}
\end{array}\right|
\left|\begin{array}{cc}
\Lambda_{n-1}&(\tilde{\theta}_{-n}^{n-1})^T\\
e_k&\boxed{0}
\end{array}\right|\right)=-\alpha_0 H_n(\mathbb{I}_p+C),\nonumber\\
\frac{d}{dt}\tau_n=\left|
\begin{array}{cc}
\Lambda_{n-1}&((\tilde{\theta}_{1}^{n-1})^T)^{'}\\
\theta_n^n&\boxed{(m_{n+1})^{'}}
\end{array}\right|+\sum_{k=0}^{n-1}\left(\left|
\begin{array}{cc}
\Lambda_{n-1}&((\Lambda_{n-1})_k)'\\
\theta_n^n&\boxed{((\theta_n^n)_k)'}
\end{array}\right|
\left|\begin{array}{cc}
\Lambda_{n-1}&(\tilde{\theta}_{1}^{n-1})^T\\
e_k&\boxed{0}
\end{array}\right|\right)=-\alpha_0 \tau_n(\mathbb{I}_p+C).
\end{aligned}
\end{equation*}    
This results in the conclusion that
\begin{align}
  &\frac{d}{dt} a_n(t)=\left(-\frac{d}{dt}\tau_n+\tau_n\tau_{n-1}^{-1}\frac{d}{dt}\tau_{n-1}\right)\tau_{n-1}^{-1}=0,\nonumber\\
  &\frac{d}{dt} b_n(t)=a_n\left( \frac{d}{dt}H_{n-1}-H_{n-1}H_{n}^{-1} \frac{d}{dt}H_n\right)H_n^{-1}=0.\nonumber
\end{align}
From the expression for $\{P_n(\lambda)\}_{n\in N}$, it follows that
\begin{equation}
    \gamma_{n,j}=\left|
\begin{array}{cc}
\Lambda_{n-1}&(\tilde{\theta}_{-n}^{n-1})^T\\
e_j&\boxed{0}
\end{array}\right|.
\end{equation}
Similarly, we have
\begin{equation}
    \frac{d}{dt}\gamma_{n,j}=\left|
\begin{array}{cc}
\Lambda_{n-1}&((\tilde{\theta}_{-n}^{n-1})^T)^{'}\\
e_j&\boxed{0}
\end{array}\right|+\sum_{k=0}^{n-1}\left(\left|
\begin{array}{cc}
\Lambda_{n-1}&((\Lambda_{n-1})_k)'\\
e_j&\boxed{((e_j)_k)'}
\end{array}\right|
\left|\begin{array}{cc}
\Lambda_{n-1}&(\tilde{\theta}_{-n}^{n-1})^T\\
e_k&\boxed{0}
\end{array}\right|\right)=0.
\end{equation}
Therefore, we have completed the proof.
\end{proof}
From the above theorem, we can clearly observe that under the moments defined in equation \eqref{new moment}, the noncommutative nonisospectral mixed rToda
lattice \eqref{rtoda} can indeed achieve stationary reduction, and $a_n(t)$ and $b_n(t)$ in \eqref{an bn} indeed constitute solutions to the matrix dP-type equation \eqref{m-dp}. More importantly, we rigorously proved that the expansion coefficients $\gamma_{n,j}$ of the matrix Laurent bi-OPs are independent of time $t$, leading to $\frac{\partial}{\partial t}P_n=0$. This indicates that the Lax pair of the noncommutative nonisospectral mixed rToda
lattice \eqref{rtoda} is indeed stationary under definition \eqref{new moment}, from which we can derive the matrix discrete Painlev\'{e}-type equation \eqref{m-dp} and its Lax pair.

\section{Conclusion and Discussion}\label{conclusion-section}

This paper investigates noncommutative Laurent bi-OPs and matrix valued dP-type equation. Specifically, we begin with noncommutative Laurent bi-OPs and perform nonisospectral deformations without defining a specific weight function. Using the compatibility conditions of the three-term recurrence relation \eqref{3-term} and time evolution \eqref{p-times} satisfied by the noncommutative Laurent bi-OPs, the noncommutative nonisospectral mixed rToda
lattice \eqref{rtoda} is derived. Next, a formal stationary reduction of the Lax pair associated with the noncommutative nonisospectral mixed rToda lattice \eqref{rtoda} is conducted, resulting in the matrix dP-type equation \eqref{m-dp}. 
% \textcolor{red}{All these results remain valid in a more generalized noncommutative setting.} 
Finally, the rationality of the stationary reduction is demonstrated from the perspective of solutions by constructing a specific weight function and utilizing the properties of quasideterminants.

\section{Acknowledgement}
We are grateful to the anonymous referees for
constructive suggestions which significantly improved the manuscript. This work was supported by grants from the Research Grants Council of the Hong Kong Special Administrative Region, China (Project No. CityU 11311622, CityU 11306723 and CityU 11301924). 

% {\blue you need to update the references.}
\paragraph*{Data Availability}
No datasets were generated or analyzed during the current study.

\section*{Declarations}
\paragraph*{Conflict of interest}
On behalf of all authors, the corresponding author states that there is no conflict of interest.

\appendix 
% \bibliographystyle{abbrv}
% \bibliography{mybib.bib}
% \bibliographystyle{unsrt}
% \bibliography{mybib}

\begin{thebibliography}{99}

\bibitem{adler2020painleve}
V.~E. Adler.
\newblock {P}ainlev{\'e} type reductions for the non-{A}belian {V}olterra
  lattices.
\newblock {\em J. Phys. A: Math. Theor.}, 54(3):035204, 2020.

\bibitem{ariznabarreta2022multivariate}
G.~Ariznabarreta and M.~Ma{\~n}as.
\newblock Multivariate orthogonal {L}aurent polynomials and integrable systems.
\newblock {\em Publ. Res. Inst. Math. Sci}, 58(1):79--185, 2022.

\bibitem{bertola2010cauchy}
M.~Bertola, M.~Gekhtman, and J.~Szmigielski.
\newblock Cauchy biorthogonal polynomials.
\newblock {\em J. Approx. Theory}, 162(4):832--867, 2010.

\bibitem{MR1979052}
A.~Borodin.
\newblock Discrete gap probabilities and discrete {P}ainlev{\'e} equations.
\newblock {\em Duke Math. J.}, 117(3):489--542, 2003.

\bibitem{cafasso2009matrix}
M.~Cafasso.
\newblock Matrix biorthogonal polynomials on the unit circle and non-{A}belian
  {A}blowitz--{L}adik hierarchy.
\newblock {\em J. Phys. A: Math. Theor.}, 42(36):365211, 2009.

\bibitem{cafasso2014non}
M.~Cafasso and M.~D. de La Iglesia.
\newblock Non-commutative {P}ainlev{\'e} equations and {H}ermite-type matrix
  orthogonal polynomials.
\newblock {\em Comm. Math. Phys.}, 326(2):559--583, 2014.

\bibitem{cassatella2012riemann}
G.~A. Cassatella-Contra and M.~Ma{\~n}as.
\newblock {R}iemann--{H}ilbert problems, matrix orthogonal polynomials and
  discrete matrix equations with singularity confinement.
\newblock {\em Stud. Appl. Math.}, 128(3):252--274, 2012.

\bibitem{chang2018partial}
X.~K. Chang, Y.~He, X.~B. Hu, and S.~H. Li.
\newblock Partial-skew-orthogonal polynomials and related integrable lattices
  with {P}faffian tau-functions.
\newblock {\em Comm. Math. Phys.}, 364(3):1069--1119, 2018.

\bibitem{MR481884}
T.~S. Chihara.
\newblock {\em An introduction to orthogonal polynomials}, volume~Vol.~13.
\newblock Gordon and Breach Science Publishers, New York-London-Paris, 1978.

\bibitem{fokas1991discrete}
A.~S. Fokas, A.~R. Its, and A.~V. Kitaev.
\newblock Discrete {P}ainlev{\'e} equations and their appearance in quantum
  gravity.
\newblock {\em Comm. Math. Phys.}, 142(2):313--344, 1991.

\bibitem{fokas1992isomonodromy}
A.~S. Fokas, A.~R. Its, and A.~V. Kitaev.
\newblock The isomonodromy approach to matric models in 2{D} quantum gravity.
\newblock {\em Comm. Math. Phys.}, 147(2):395--430, 1992.

\bibitem{gekhtman2016integrable}
M.~Gekhtman, M.~Shapiro, S.~Tabachnikov, and A.~Vainshtein.
\newblock Integrable cluster dynamics of directed networks and pentagram maps.
\newblock {\em Adv. in Math.}, 300:390--450, 2016.

\bibitem{gelfand1994noncommutative}
I.~Gelfand, D.~Krob, A.~Lascoux, B.~Leclerc, V.~S. Retakh, and J.~Y. Thibon.
\newblock Noncommutative symmetric functions.
\newblock {\em Adv. in Math.}, 112:218--348, 1995.

\bibitem{gelfand2005quasideterminants}
I.~Gelfand, S.~Gelfand, V.~Retakh, and R.~L. Wilson.
\newblock Quasideterminants.
\newblock {\em Adv. in Math.}, 193(1):56--141, 2005.

\bibitem{gilson2007direct}
C.~R. Gilson and J.~J.~C. Nimmo.
\newblock On a direct approach to quasideterminant solutions of a
  noncommutative {KP} equation.
\newblock {\em J. Phys. A: Math. Theor.}, 40(14):3839, 2007.

\bibitem{grammaticos1991integrable}
B.~Grammaticos, A.~Ramani, and V.~Papageorgiou.
\newblock Do integrable mappings have the {P}ainlev{\'e} property?
\newblock {\em Phys. Rev. Lett.}, 67(14):1825, 1991.

\bibitem{grammaticos1999discrete}
B.~Grammaticos, F.~W. Nijhoff, and A.~Ramani.
\newblock Discrete {P}ainlev{\'e} equations.
\newblock In {\em The {P}ainlev{\'e} Property: One Century Later}, pages
  413--516. Springer, 1999.

\bibitem{hone2002lattice}
A.~N.~W. Hone.
\newblock Lattice equations and $\tau$-functions for a coupled {P}ainlev{\'e}
  system.
\newblock {\em Nonlinearity}, 15(3):735, 2002.

\bibitem{MR2542683}
M.~E.~H. Ismail.
\newblock {\em Classical and quantum orthogonal polynomials in one variable},
  volume~98.
\newblock Cambridge University Press, Cambridge, 2009.

\bibitem{jones2006survey}
W.~B. Jones and W.~J. Thron.
\newblock Survey of continued fraction methods of solving moment problems and
  related topics.
\newblock In {\em Analytic Theory of Continued Fractions: Proceedings of a
  Seminar-Workshop held at Loen, Norway, 1981}, pages 4--37. Springer, 2006.

\bibitem{MR3931704}
N.~Joshi.
\newblock {\em Discrete {P}ainlev{\'e} equations}, volume~131.
\newblock American Mathematical Society, Providence, RI, 2019.

\bibitem{kharchev1997faces}
S.~Kharchev, A.~Mironov, and A.~Zhedanov.
\newblock Faces of relativistic {T}oda chain.
\newblock {\em Internat. J. Modern Phys. A}, 12(15):2675--2724, 1997.

\bibitem{MR34964}
M.~G. Kre\u{\i}n.
\newblock Infinite {$J$}-matrices and a matrix-moment problem.
\newblock {\em Doklady Akad. Nauk SSSR (N.S.)}, 69:125--128, 1949.

\bibitem{MR1328259}
D.~Krob and B.~Leclerc.
\newblock Minor identities for quasi-determinants and quantum determinants.
\newblock {\em Comm. Math. Phys.}, 169(1):1--23, 1995.

\bibitem{levi1992non}
D.~Levi, O.~Ragnisco, and M.~A. Rodriguez.
\newblock On non-isospectral flows, {P}ainlev{\'e} equations, and symmetries of
  differential and difference equations.
\newblock {\em Theor. Math. Phys.}, 93(3):1409--1414, 1992.

\bibitem{li2008quasideterminant}
C.~X. Li and J.~J.~C. Nimmo.
\newblock Quasideterminant solutions of a non-{A}belian {T}oda lattice and
  kink solutions of a matrix sine-{G}ordon equation.
\newblock {\em Proc. R. Soc. Lond. Ser. A Math. Phys. Eng. Sci.},
  464(2092):951--966, 2008.

\bibitem{li2024matrix}
S.~H. Li.
\newblock Matrix orthogonal polynomials, non-abelian {T}oda lattices, and
  {B}{\"a}cklund transformations.
\newblock {\em Sci. China Math.}, 67(9):2071--2090, 2024.

\bibitem{li2025matrix}
S.~H. Li, Y.~Shi, G.~F. Yu, and J.~X. Zhao.
\newblock {M}atrix-{V}alued {C}auchy {B}i-{O}rthogonal {P}olynomials and a
  {N}ovel {N}oncommutative {I}ntegrable {L}attice.
\newblock {\em Stud. Appl. Math.}, 154(3):e70040, 2025.

\bibitem{minesaki2001discrete}
Y.~Minesaki and Y.~Nakamura.
\newblock The discrete relativistic {T}oda molecule equation and a {P}ad{\'e}
  approximation algorithm.
\newblock {\em Numer. Algorithms.}, 27(3):219--235, 2001.

\bibitem{nijhoff1996study}
F.~Nijhoff, J.~Satsuma, K.~Kajiwara, B.~Grammaticos, and A.~Ramani.
\newblock A study of the alternate discrete {P}ainlev{\'e} {II} equation.
\newblock {\em Inverse Problems}, 12(5):697, 1996.

\bibitem{MR372517}
G.~Szeg{\H{o}}.
\newblock {\em Orthogonal polynomials}, volume~Vol.~XXIII.
\newblock American Mathematical Society, Providence, RI, fourth edition, 1975.

\bibitem{van2017orthogonal}
W.~Van Assche.
\newblock {\em Orthogonal polynomials and {P}ainlev{\'e} equations}, volume~27.
\newblock Cambridge University Press, 2017.

\bibitem{vinet1998integrable}
L.~Vinet and A.~Zhedanov.
\newblock An integrable chain and bi-orthogonal polynomials.
\newblock {\em Lett. Math. Phys.}, 46(3):233--245, 1998.

\bibitem{wang2022generalization}
B.~Wang, X.~K. Chang, and X.~L. Yue.
\newblock A generalization of {L}aurent biorthogonal polynomials and related
  integrable lattices.
\newblock {\em J. Phys. A: Math. Theor.}, 55(21):214002, 2022.

\bibitem{wang2025noncommutative}
B.~Wang and S.~H. Li.
\newblock On noncommutative leapfrog map.
\newblock {\em J. Lond. Math. Soc.}, 111(1):e70063, 2025.

\bibitem{MR4832540}
B.~Wang and X.~K. Chang.
\newblock Pentagram maps on coupled polygons: integrability, geometry and
  orthogonality.
\newblock {\em J. Nonlinear Sci.}, 35(1):Paper No. 19, 27, 2025.

\bibitem{yue2022laurent}
X.~L. Yue, X.~K. Chang, X.~B. Hu, and Y.~J. Liu.
\newblock On {L}aurent biorthogonal polynomials and {P}ainlev{\'e}-type
  equations.
\newblock {\em Proc. Amer. Math. Soc.}, 150(10):4369--4381, 2022.

\bibitem{yueXL2023thesis}
X.~L. Yue, X.~K. Chang, and X.~B. Hu.
\newblock Stationary reduction method based on nonisospectral deformation of
  orthogonal polynomials, and discrete {P}ainlev{\'e}-type equations.
\newblock {\em arXiv preprint arXiv:2407.09875}, 2024.

\end{thebibliography}

\end{document}